\def\doi{8 (2:12) 2012}
\begin{document}

\title[Untyping Typed Algebras and Colouring Cyclic Linear Logic]%
{Untyping Typed Algebras \\ and Colouring Cyclic Linear Logic\rsuper*}%
\author[D.~Pous]{Damien Pous}%
\address{CNRS (LIG, UMR 5217, Grenoble)}%
\email{Damien.Pous@ens-lyon.fr}
\thanks{Partially funded by the French projects ``Choco'',
  ANR-07-BLAN-0324 and ``PiCoq'', ANR-10-BLAN-0305.}%
\titlecomment{{\lsuper*}Extended version of the abstract that appeared in
  Proc. CSL'10~\cite{pous:csl10:utas}.}%

\keywords{involutive residuated lattices, cyclic linear logic, Kleene
  algebra, typed algebra, decision procedures, sequent calculus, proof
  search}



\subjclass{F.4.1, F.4.3}

\begin{abstract}
  \noindent We prove ``untyping'' theorems: in some typed theories
  (semi\-rings, Kleene algebras, residuated lattices, involutive
  residuated lattices), typed equations can be derived from the
  underlying untyped equations. As a consequence, the corresponding
  untyped decision procedures can be extended for free to the typed
  settings.
  Some of these theorems are obtained via a detour through fragments
  of cyclic linear logic, and give rise to a substantial optimisation
  of standard proof search algorithms.
\end{abstract}

\maketitle

\section*{Introduction}

\subsection*{Motivations.} 

The literature contains many decidability or complexity results for
various algebraic structures.  Some of these structures (rings, Kleene
algebras~\cite{Koz94b}, residuated lattices~\cite{OnoK85}) can be
generalised to \emph{typed} structures, where the elements come with a
domain and a codomain, and where operations are defined only when
these domains and codomains agree according to some simple
rules. Although such typed structures are frequently encountered in
practice (e.g., rectangular matrices, heterogeneous binary relations,
or more generally, categories), there are apparently no proper tools
to easily reason about these.

This is notably problematic in proof assistants, where powerful
decision procedures are required to let the user focus on difficult
reasoning steps by leaving administrative details to the
computer. Indeed, although some important theories can be decided
automatically in Coq or HOL (e.g., Presburger
arithmetic~\cite{Norrish03}, elementary real
algebra~\cite{Harrison93}, rings~\cite{GregoireM05}), there are no
high-level tools to reason about heterogeneous relations or
rectangular matrices.

\medskip

In this paper, we show how to extend the standard decision procedures
from the untyped structures to the corresponding typed structures. In
particular, we make it possible to use standard tools to reason about
rectangular matrices or heterogeneous relations, without bothering
about types (i.e., matrix dimensions or domain/codomain information).
The approach we propose is depicted below: we study ``untyping''
theorems that allow one to prove typed equations as follows: 1) erase
type informations, 2) prove the equation using standard, untyped,
decision procedures, and 3) derive a typed proof from the untyped one.
\begin{mathpar}
  \xymatrix @C=.9em {
    {\txt{untyped setting:}\quad}&
    &{\hat a}\ar@{=}^{\txt{\small{decide}}}[rr] &&{\hat b}&
    \ar|-{\txt{\small{rebuild types}}}[d]\\
    {\txt{typed setting:}}\quad&
    \ar|-{\txt{\small{erase types}}}[u]
    &{a}\ar@{=}[r]& {?}\ar@{=}[r]&{b}&
  }
\end{mathpar}
Besides the theoretical aspects, an important motivation behind this
work comes from a Coq library~\cite{atbr:itp} in which we
developed efficient tactics for partial axiomatisations of relations:
the ideas presented here were used and integrated in this library to
extend our tactics to typed structures, for free.

\subsection*{Overview.}

We shall mainly focus on the two algebraic structures we mentioned
above, since they raise different problems and illustrate several
aspects of these untyping theorems: Kleene algebras~\cite{Kle56} and
residuated lattices~\cite{Jipsen:Survey}.
\begin{iteMize}{$\bullet$}
\item The case of Kleene algebras is the simplest one. The main
  difficulty comes from the annihilating element $(0)$: its
  polymorphic typing rule requires us to show that equational proofs
  can be factorised so as to use the annihilation laws at first, and
  then reason using the other axioms.

\item The case of residuated structures is more involved: due to the
  particular form of axioms about residuals, we cannot rely on
  standard equational axiomatisations of these structures. Instead, we
  need to exploit an equivalent cut-free sequent proof system (first
  proposed by Ono and Komori~\cite{OnoK85}), and to notice that this
  proof system corresponds to the intuitionistic fragment of cyclic
  linear logic~\cite{Yetter90}. The latter logic is much more concise
  and the corresponding proof nets are easier to reason about, so that
  we obtain the untyping theorem in this setting. We finally port the
  result back to residuated lattices by standard means.
\end{iteMize}
The above sequent proof systems have the sub-formula property, so that
they yield decision procedures, using proof search algorithms. As an
unexpected application, we show that the untyping theorem makes it
possible to improve these algorithms by reducing the set of proofs
that have to be explored.

\subsection*{Outline.}

We introduce our notations and make the notion of typed structure
precise in~§\ref{sec:defs}. We study Kleene algebras and residuated
lattices in~§\ref{sec:ka} and~§\ref{sec:rl}, respectively. The
optimisation of proof search is analysed in~§\ref{sec:optim}; we
conclude with related work, and directions for future work
in~§\ref{sec:ccl}.

\section{Notation, typed structures}
\label{sec:defs}



Let $\mathcal X$ be an arbitrary set of \emph{variables}, ranged over
using letters $x,y$. Given a signature $\Sigma$, we let $a,b,c$ range
over the set $T(\Sigma+\mathcal X)$ of \emph{terms with variables}.
Given a set $\mathcal T$ of \emph{objects} (ranged over using letters
$n,m,p,q$), a \emph{type} is a pair $(n,m)$ of objects (which we
denote by $\h n m$, following categorical notation), a \emph{type
  environment} $\Gamma: \mathcal X\to\mathcal T^2$ is a function from
variables to types, and we will define \emph{type judgements} of the
form $\typ[\Gamma] a n m$, to be read ``in environment $\Gamma$, term
$a$ has type $\h n m$, or, equivalently, $a$ is a morphism from $n$ to
$m$''. By $\typ[\Gamma] {a,b} n m$, we mean that both $a$ and $b$ have
type $\h n m$; type judgements will include the following rule for
variables: %
\begin{mathpar} 
  \inferrule*[Right=Tv]{\Gamma(x)=(n,m)}{\typ[\Gamma] x n m} \and
\end{mathpar}
Similarly, we will define \emph{typed equality} judgements of the form
$\teq[\Gamma] a b n m$: ``in environment $\Gamma$, terms $a$ and $b$
are equal, at type $\h n m$''. Equality judgements will generally
include the following rules, so as to obtain an equivalence relation
at each type: %
\begin{mathpar}
  \inferrule*[Right=v]{\Gamma(x)=(n,m)}{\teq[\Gamma] x x n m} \and
  \inferrule*[Right=t]
  {\teq[\Gamma] a b n m \\\\ \teq[\Gamma] b c n m}
  {\teq[\Gamma] a c n m} \and 
  \inferrule*[Right=s]
  {\teq[\Gamma] a b n m}{\teq[\Gamma] b a n m}
\end{mathpar}
%
%
By taking the singleton set as set of objects $(\mathcal
T=\set\emptyset)$, we recover standard, untyped structures: the only
typing environment is $\Hamma: x\mapsto(\emptyset,\emptyset)$, and
types become uninformative (this corresponds to working in a
one-object category; all operations are total functions). To alleviate
notations, since the typing environment will always be either $\Hamma$
or an abstract constant value $\Gamma$, we shall leave it implicit in
type and equality judgements, by relying on the absence or the
presence of types to indicate which one to use. For example, we shall
write $\teq a b n m$ for $\teq[\Gamma] a b n m$, while $\peq a b$ will
denote the judgement $\teq[\Hamma] a b\emptyset\emptyset$.

\medskip

The question we study in this paper is the following one: given a
signature and a set of inference rules defining a type judgement and
an equality judgement, does the implication below hold, for all
$a,b,n,m$? 
\begin{align*}
  \begin{cases}
    \typ {a,b} n m\\
    \peq a b
  \end{cases}
  \quad\text{ entails }\quad \teq a b n m\enspace.
\end{align*}
In other words, in order to prove an equality in a typed structure, is
it safe to remove all type annotations, so as to work in the untyped
underlying structure?

\section{Kleene algebras}
\label{sec:ka}

\noindent 
We study the case of residuated lattices in~§\ref{sec:rl}; here
we focus on Kleene algebras.  In order to illustrate our methodology,
we actually give the proof in three steps, by considering two
intermediate algebraic structures: monoids and 
semirings. The former admit a rather simple and direct proof, while
the latter are sufficient to expose concisely the main difficulty in
handling Kleene algebras.

\clearpage 
\subsection{Monoids}
\label{ss:monoids} 

\begin{defi}
  \label{def:monoid}
  \emph{Typed monoids} are defined by the signature
  $\{\cdot{}_2,1_0\}$, together with the following inference rules, in
  addition to the rules from~§\ref{sec:defs}.
\begin{mathpar}
  \inferrule*[Right=To]{ }{\typ 1 n n} \and
  \inferrule*[Right=Td]{\typ a n m \and \typ b m p}{\typ {a\cdot{}b} n p} \and
  \inferrule*[Right=o]{ }{\teq 1 1 n n} \and
  \inferrule*[Right=d]{\teq a {a'} n m \and \teq b {b'} m p }
  {\teq {a\cdot{}b} {a'\cdot{}b'} n p} \and
  \inferrule*[Right=od]{\typ a n m}{\teq {1\cdot{}a} a n m} \and
  \inferrule*[Right=da]{\typ a n m \and \typ b m p \and \typ c p q}
  {\teq {(a\cdot{}b)\cdot{}c} {a\cdot{}(b\cdot{}c)} n q} \and
  \inferrule*[Right=do]{\typ a n m}{\teq {a\cdot{}1} a n m}
  \end{mathpar}
\end{defi}

\noindent 
In other words, typed monoids are just categories: $1$ and $\cdot$
correspond to identities and composition.  Rules \rul o and \rul d
ensure that equality is reflexive at each type (point $(i)$ below) and
preserved by composition. As expected, equalities relate correctly
typed terms only $(ii)$:
\begin{lem}\label{lem:m:sanity}
  \begin{enumerate}[(i)]
  \item If $\typ a n m$, then $\teq a a n m$.
  \item If $\teq a b n m$, then $\typ {a,b} n m$.
 \end{enumerate} 
\end{lem}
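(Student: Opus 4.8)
The plan is to prove the two parts independently, each by a straightforward structural induction on the derivation supplied in its hypothesis: part $(i)$ by induction on the typing derivation $\typ a n m$, and part $(ii)$ by induction on the equality derivation $\teq a b n m$. Neither part needs the other, so no mutual induction is required.

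For part $(i)$, exactly three rules can conclude $\typ a n m$, and each is mirrored by a reflexivity or congruence rule for equality. In the \rul{Tv} case $a$ is a variable $x$ with $\Gamma(x)=(n,m)$, and rule \rul{v} gives $\teq x x n m$ directly. In the \rul{To} case $a$ is $1$ at type $\h n n$, and rule \rul{o} gives $\teq 1 1 n n$. In the \rul{Td} case $a$ has the form $b\cdot c$ with $\typ b n p$ and $\typ c p m$ for some intermediate object $p$; the induction hypotheses yield $\teq b b n p$ and $\teq c c p m$, and rule \rul{d} combines them into $\teq {b\cdot c} {b\cdot c} n m$.

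For part $(ii)$, I would check in each case of the equality derivation that both related terms are typable at $\h n m$. The base cases \rul{v} and \rul{o} give terms typable immediately by \rul{Tv} and \rul{To}; the rules \rul{t} and \rul{s} follow from the induction hypotheses by elementary bookkeeping. The congruence rule \rul{d} is the one case that needs attention to the shared intermediate object: from the two sub-equalities the induction hypotheses give both left components typable at $\h n m'$ and both right components typable at $\h m' p$ for the intermediate $m'$, and two applications of \rul{Td} reconstruct the types of both composites. Finally, the axiom rules \rul{od}, \rul{da}, and \rul{do} already carry all the relevant typing information in their premises, so one simply reads off the type of each term, re-deriving the type of a unit or of a composite via \rul{To} and \rul{Td} where needed.

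Since every inference rule fixes the shape of the terms involved and supplies in its premises exactly the type data one needs, no step presents a genuine obstacle; the only point to watch — in both the \rul{Td} case of $(i)$ and the \rul{d} case of $(ii)$ — is that the intermediate object of a composition is propagated consistently between the two sides.
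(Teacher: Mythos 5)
Your proof is correct, and it is the evident argument: the paper states this lemma without proof (deferring to its Coq scripts), and the routine structural inductions you give --- on the typing derivation for $(i)$ and on the equality derivation for $(ii)$, with the only care point being the intermediate object in the composition cases --- are exactly what that formalisation carries out.
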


\noindent Moreover, in this setting, type judgements enjoy some form
of injectivity (types are not uniquely determined due to the unit
$(1)$, which is typed in a polymorphic way): 

\begin{lem}
  \label{lem:m:tinj}
  If $\typ a n m$ and $\typ a {n'} {m'}$, then we have $n=n'$ iff $m=m'$.
\end{lem}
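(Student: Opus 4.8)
The statement is an invariant about the shape of typing derivations, and it has exactly the right form to be proved by structural induction on the term $a$ (equivalently, by induction on one of the two typing derivations, inverting on the head symbol of $a$). There are three cases to consider, one for each typing rule that can conclude a judgement $\typ a n m$: the variable rule \rul{Tv}, the unit rule \rul{To}, and the composition rule \rul{Td}. The key remark is that the signature is unary-free except for composition, so the last rule of a derivation is entirely determined by whether $a$ is a variable, the constant $1$, or a product $b\cdot c$; this gives the inversion principle we shall use silently throughout.

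First I would treat the two base cases. If $a=x$ is a variable, then both $\typ x n m$ and $\typ x {n'} {m'}$ must come from \rul{Tv}, so $(n,m)=\Gamma(x)=(n',m')$; in particular $n=n'$ and $m=m'$ both hold and the biconditional is trivially true. If $a=1$, then both judgements come from \rul{To}, forcing $n=m$ and $n'=m'$. The desired equivalence then follows purely by transitivity and symmetry of equality: assuming $n=n'$ we get $m=n=n'=m'$, and symmetrically from $m=m'$ we recover $n=n'$. This unit case is precisely where the polymorphic typing of $1$ shows up, and it explains why the conclusion is only a biconditional $n=n'\iff m=m'$ rather than full uniqueness of types.

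For the inductive case $a=b\cdot c$, I would invert \rul{Td} on each of the two derivations. This yields intermediate objects $p$ and $p'$ together with judgements $\typ b n p$, $\typ c p m$ on one side and $\typ b {n'} {p'}$, $\typ c {p'} {m'}$ on the other. Applying the induction hypothesis to $b$ gives $n=n' \iff p=p'$, and applying it to $c$ gives $p=p' \iff m=m'$. Chaining these two equivalences yields $n=n'\iff m=m'$, which closes the case.

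I do not expect a genuine obstacle here: the argument is a routine induction whose only subtlety is the correct bookkeeping of the shared intermediate object in the composition case, and the observation that the induction hypothesis must be invoked on \emph{both} immediate subterms so that the two biconditionals can be composed. The generalisation of the conclusion to a biconditional (rather than $n=n'$ and $m=m'$ separately) is exactly what makes this induction go through in the presence of the polymorphic unit.
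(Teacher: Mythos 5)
Your proof is correct and is the routine structural induction the paper evidently intends (the paper states this lemma without an explicit proof, deferring to its Coq development); the three-case analysis and the chaining of the two biconditionals through the shared intermediate object in the composition case are exactly right.
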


\noindent We need another lemma to obtain the untyping theorem:
all terms related by the untyped equality admit the same type derivations.

\begin{lem}
  \label{lem:m:teqtyp}
  If $\peq a b$; then for all $n,m$, we have $\typ a n m$ iff $\typ b n m$.
\end{lem}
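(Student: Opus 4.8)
The plan is to prove Lemma~\ref{lem:m:teqtyp} by induction on the derivation of $\peq a b$, the untyped equality judgement. Since $\peq a b$ means $\teq[\Hamma] a b \emptyset \emptyset$, it is derived using the equality rules available for monoids, namely \rul v, \rul t, \rul s, \rul o, \rul d, \rul{od}, \rul{da}, and \rul{do}. The goal is a statement symmetric in $a$ and $b$ (``$\typ a n m$ iff $\typ b n m$''), and by rule \rul s the relation $\peq{\cdot}{\cdot}$ is itself symmetric, so it suffices to prove one direction uniformly: whenever $\peq a b$ and $\typ a n m$, then $\typ b n m$; the converse then follows by symmetry of the hypothesis.

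First I would dispatch the easy structural cases. For the reflexivity-style rule \rul v the terms $a$ and $b$ coincide, so there is nothing to prove. For transitivity \rul t, the induction hypotheses on $\peq a c$ and $\peq c b$ compose directly. For symmetry \rul s, the induction hypothesis already gives the biconditional, which is symmetric. For the congruence rule \rul d, where $a = a_1\cdot a_2$ and $b = b_1\cdot b_2$ with $\peq{a_1}{b_1}$ and $\peq{a_2}{b_2}$, I would invert the hypothesis $\typ{a_1\cdot a_2} n m$: the only typing rule for a product is \rul{Td}, so there is an intermediate object $p$ with $\typ{a_1} n p$ and $\typ{a_2} p m$; applying the induction hypotheses gives $\typ{b_1} n p$ and $\typ{b_2} p m$, and \rul{Td} reassembles $\typ{b_1\cdot b_2} n m$.

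The genuinely interesting cases are the monoid axioms \rul{od}, \rul{do}, and \rul{da}, where $a$ and $b$ are \emph{distinct} terms and I must show they admit exactly the same types. Consider \rul{od}, which relates $1\cdot a'$ and $a'$. Given $\typ{1\cdot a'} n m$, inverting \rul{Td} yields an intermediate object $p$ with $\typ 1 n p$ and $\typ{a'} p m$; but the only rule typing $1$ is \rul{To}, which forces $n=p$, so $\typ{a'} n m$ as required, and the reverse direction uses \rul{To} to retype the unit. The case \rul{do} is symmetric. For associativity \rul{da}, relating $(a_1\cdot a_2)\cdot a_3$ and $a_1\cdot(a_2\cdot a_3)$, I would peel off two applications of the inversion of \rul{Td} on each side; both parenthesisations expose the same three component types $\typ{a_1} n p$, $\typ{a_2} p q$, $\typ{a_3} q m$ (for suitable intermediate objects), merely associated differently, so the two type derivations are interderivable by rebuilding with \rul{Td} in the other order. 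I expect the main obstacle to be bookkeeping precisely which intermediate objects are shared between the two sides in the unit and associativity cases; the key tool throughout is that each syntactic constructor is typed by a unique rule, so inversion pins down the intermediate objects, and Lemma~\ref{lem:m:tinj} (the injectivity of typing) is what guarantees that reconstructing the other side is forced rather than merely possible. No induction hypothesis is needed for \rul{od}, \rul{do}, or \rul{da}, since these are axioms with no equality premises; the whole weight of those cases rests on inverting the typing rules.
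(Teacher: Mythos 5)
Your proof is correct and is exactly the intended argument (the paper states this lemma without a written proof, deferring to its Coq development): induction on the derivation of $\peq a b$, with the axiom cases \rul{od}, \rul{do}, \rul{da} discharged by inverting the typing rules \rul{Td} and \rul{To} to pin down the intermediate objects. Two small remarks: the induction must carry the full biconditional---your opening reduction to a single direction would leave the \rul s case with an induction hypothesis pointing the wrong way, though you do in fact invoke the biconditional there---and Lemma~\ref{lem:m:tinj} is not actually needed anywhere in this proof, since inversion of \rul{Td} and \rul{To} alone forces the reconstruction.
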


\begin{thm}
  \label{thm:m:untype}
  If $\peq a b$  and $\typ {a,b} n m$, then $\teq a b n m$.
\end{thm}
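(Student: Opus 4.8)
The plan is to argue by induction on the derivation of $\peq a b$, examining the last rule applied. Every rule that can conclude an untyped equality—the rules \rul v, \rul t, \rul s from~§\ref{sec:defs} and \rul o, \rul d, \rul{od}, \rul{da}, \rul{do} from Definition~\ref{def:monoid}, all instantiated at the trivial environment $\Hamma$—comes paired with a typed counterpart producing a judgement of the form $\teq{\cdot}{\cdot} n m$. So in each case I reconstruct the corresponding typed equality, recovering the intermediate objects from the hypothesis $\typ{a,b} n m$ by inverting the typing rules. Throughout I use Lemma~\ref{lem:m:teqtyp} to transport a typing across an untyped equality, and I will need Lemma~\ref{lem:m:tinj} once.

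The reflexivity cases are immediate: if the last rule is \rul v or \rul o then $a=b$ (a variable or the unit), and Lemma~\ref{lem:m:sanity}(i) turns the hypothesis $\typ a n m$ directly into $\teq a a n m$. The axiom cases \rul{od}, \rul{do}, \rul{da} are handled by inversion on the typing. For \rul{da}, say, we have $a=(a_1\cdot a_2)\cdot a_3$ and $b=a_1\cdot(a_2\cdot a_3)$, and inverting $\typ a n m$ through rule \rul{Td} twice exposes objects $p,q$ with $\typ{a_1} n p$, $\typ{a_2} p q$, $\typ{a_3} q m$; feeding these to the typed rule \rul{da} yields $\teq a b n m$. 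The cases \rul{od} and \rul{do} only require reading off $\typ{a_1} n m$ from the hypothesis and applying the matching typed rule.

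For transitivity (\rul t), $\peq a c$ comes from $\peq a b$ and $\peq b c$; I know $\typ{a,c} n m$, and Lemma~\ref{lem:m:teqtyp} applied to $\peq a b$ promotes $\typ a n m$ to $\typ b n m$, so both sub-derivations satisfy the typing hypothesis. The induction hypothesis then gives $\teq a b n m$ and $\teq b c n m$, and rule \rul t concludes. Symmetry (\rul s) is analogous, swapping the roles of $a$ and $b$.

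The real work is the congruence case \rul d, where $a=a_1\cdot a_2$ and $b=b_1\cdot b_2$ arise from $\peq{a_1}{b_1}$ and $\peq{a_2}{b_2}$. Inverting the two typing hypotheses through rule \rul{Td} gives objects $p$ and $p'$ with $\typ{a_1} n p$, $\typ{a_2} p m$, $\typ{b_1} n {p'}$, $\typ{b_2} {p'} m$; to apply the induction hypothesis and then reassemble via the typed rule \rul d, I must know these intermediate objects coincide. This is exactly where Lemmas~\ref{lem:m:teqtyp} and~\ref{lem:m:tinj} combine: from $\peq{a_1}{b_1}$ and $\typ{a_1} n p$, Lemma~\ref{lem:m:teqtyp} gives $\typ{b_1} n p$, and then Lemma~\ref{lem:m:tinj} applied to $b_1$, typed both at $\h n p$ and at $\h n {p'}$, forces $p=p'$. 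With the middle objects aligned, the induction hypothesis yields $\teq{a_1}{b_1} n p$ and $\teq{a_2}{b_2} p m$, and rule \rul d delivers $\teq{a_1\cdot a_2}{b_1\cdot b_2} n m$. I expect this matching of intermediate objects to be the only genuinely delicate point; the remaining cases are bookkeeping on top of the typed rules and the sanity lemmas.
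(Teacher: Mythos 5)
Your proposal is correct and follows essentially the same route as the paper: induction on the untyped derivation, with Lemma~\ref{lem:m:teqtyp} transporting the typing across the middle term in the transitivity case, and the combination of Lemmas~\ref{lem:m:teqtyp} and~\ref{lem:m:tinj} aligning the intermediate objects in the congruence case for~$\cdot$. The remaining cases you spell out are exactly the routine ones the paper leaves implicit.
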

\begin{proof}
  We reason by induction on the derivation $\peq a b$; the
  interesting cases are the following ones:
  \begin{iteMize}{$\bullet$}
  \item the last rule used is the transitivity rule \rul t: we have
    $\peq a b$, $\peq b c$, $\typ {a,c} n m$, and we need to show that
    $\teq a c n m$. By Lemma~\ref{lem:m:teqtyp}, we have $\typ b n m$,
    so that by the induction hypotheses, we get $\teq a b n m$ and
    $\teq b c n m$, and we can apply rule \rul t.
  \item the last rule used is the compatibility of $\cdot{}$ \rul d:
    we have $\peq a a'$, $\peq b b'$, $\typ {a\cdot{}b,a'\cdot{}b'} n
    m$, and we need to show that $\teq {a\cdot{}b} {a'\cdot{}b'} n
    m$. By case analysis on the typing judgements, we deduce that
    $\typ a n p$, $\typ b p m$, $\typ {a'} n q$, $\typ {b'} q m$, for
    some $p,q$.  Thanks to Lemmas~\ref{lem:m:tinj}
    and~\ref{lem:m:teqtyp}, we have $p=q$, so that we can conclude
    using the induction hypotheses ($\teq a {a'} n p$ and $\teq b {b'}
    p m$), and rule \rul d. \qedhere
  \end{iteMize}
\end{proof}

\noindent Note that the converse of Theorem~\ref{thm:m:untype} ($\teq
a b n m$ entails $\peq a b$) is straightforward, so that we actually
have an equivalence.

\subsection{Non-commutative semirings}
\label{sec:semiring}

\begin{defi}
  \label{def:semiring}
  \emph{Typed semirings} are defined by the
  signature $\{\cdot{}_2,+_2,1_0,0_0\}$, together with the following
  rules, in addition to the rules from Def.~\ref{def:monoid}
  and~§\ref{sec:defs}.
  \begin{mathpar}
    \inferrule*[Right=Tz]{ }{\typ 0 n m} \and
    \inferrule*[Right=Tp]{\typ {a,b} n m}{\typ {a+b} n m}\and
    \inferrule*[Right=p]{\teq a {a'} n m \and \teq b {b'} n m}
    {\teq {a+b} {a'+b'} n m} \and
    \inferrule*[Right=z]{ }{\teq 0 0 n m} \and
    \inferrule*[Right=pz]{\typ a n m}{\teq {a+0} a n m} \and
    \inferrule*[Right=pc]{\typ {a,b} n m}{\teq {a+b} {b+a} n m} \and
    \inferrule*[Right=pa]{\typ {a,b,c} n m}{\teq {(a+b)+c} {a+(b+c)} n m} \and
    \inferrule*[Right=dp]{\typ a n m \and \typ{b,c} m p}
    {\teq {a\cdot{}(b+c)} {a\cdot{}b+a\cdot{}c} n p} \and
    \inferrule*[Right=dz]{\typ a n m}{\teq {a\cdot{}0} 0 n p} \and
    \inferrule*[Right=zd]{\typ a n m}{\teq {0\cdot{}a} 0 p m} \and
    \inferrule*[Right=pd]{\typ a n m \and \typ{b,c} p n}
    {\teq {(b+c)\cdot{}a} {b\cdot{}a+c\cdot{}a} p m} 
 \end{mathpar}
\end{defi}
\noindent 
In other words, typed semiring are categories enriched over a
commutative monoid: each homset is equipped with a commutative monoid
structure (typing rules \rul{Tz,Tp} and rules \rul{p,pz,pc,pa}), 
composition distributes over these monoid structures (rules
\rul{dp,dz,pd,zd}).

\medskip

Lemma~\ref{lem:m:sanity} is also valid in this setting: equality is
reflexive and relates correctly typed terms only. However, due to the
presence of the annihilator element $(0)$, Lemmas~\ref{lem:m:tinj}
and~\ref{lem:m:teqtyp} no longer hold: $0$ has any type, and we have
$\peq {x\cdot{}0\cdot{}x} 0$ while $x\cdot{}0\cdot{}x$ only admits
$\Gamma(x)$ as a valid type. Moreover, some valid proofs cannot be
typed just by adding decorations: for example, $0=0\cdot{}a\cdot{}a=0$
is a valid untyped proof of $0=0$; however, this proof cannot be typed
if $a$ has a non-square type. Therefore, we have to adopt another
strategy: we reduce the problem to the annihilator-free case, by
showing that equality proofs can be factorised so as to use rules
\rul{pz}, \rul{dz}, and \rul{zd} at first, as oriented rewriting
rules.

\begin{defi}
  \label{def:s:clean}
  Let $a$ be a term; we denote by $\clean a$ the \emph{normal form of} $a$,
  obtained with the following convergent rewriting system:
  \begin{mathpar}
    a+0 \to a \and 
    0+a \to a \and 
    0\cdot{}a \to 0 \and 
    a\cdot{}0 \to 0 
  \end{mathpar}
  We say that $a$ is \emph{strict} if $\clean a\neq 0$. 
\end{defi}

This normalisation procedure preserves types and equality; moreover,
on strict terms, we recover the injectivity property of types we had
for monoids:

\begin{lem}
  \label{lem:s:eqclean}
  If $\typ a n m$, then $\typ {\clean a} n m$ and $\teq a {\clean a} n
  m$.
\end{lem}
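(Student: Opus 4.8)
The plan is to prove Lemma~\ref{lem:s:eqclean} by structural induction on the term $a$, exploiting the fact that $\clean{\cdot}$ is defined via a convergent rewriting system, so it suffices to track how a single step of typed reasoning mirrors each rewriting step. The two conjuncts—that $\clean a$ has the same type as $a$, and that $\teq a {\clean a} n m$—will be established simultaneously, since the typed-equality claim requires the typing claim to even make sense (by Lemma~\ref{lem:m:sanity}$(ii)$, both sides of an equality must be well typed at $\h n m$). I would first record the reflexivity fact (Lemma~\ref{lem:m:sanity}$(i)$): whenever $\typ a n m$ we have $\teq a a n m$, which handles any subterm that is already in normal form.

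First I would treat the base cases: variables and the constants $1$ and $0$. For a variable $x$, $\clean x = x$, and the claim is immediate from rule \rul v together with $\typ x n m$. For $1$ and $0$, likewise $\clean 1 = 1$ and $\clean 0 = 0$, so reflexivity suffices. The work is in the two inductive cases. For a sum $a+b$ with $\typ {a+b} n m$, inversion on the typing rule \rul{Tp} gives $\typ a n m$ and $\typ b n m$; the induction hypotheses yield $\typ {\clean a} n m$, $\typ {\clean b} n m$, $\teq a {\clean a} n m$, and $\teq b {\clean b} n m$. By compatibility \rul p I obtain $\teq {a+b} {\clean a + \clean b} n m$, and then I do a three-way case split according to whether $\clean a$ or $\clean b$ equals $0$, using rule \rul{pz} (and \rul{pc}, \rul z as needed) to collapse $\clean a + \clean b$ to whichever of $\clean a, \clean b, 0$ equals $\clean{a+b}$, finishing with transitivity \rul t. The product case $a\cdot b$ with $\typ {a\cdot b} n p$ is analogous: inversion on \rul{Td} supplies an intermediate object $m$ with $\typ a n m$ and $\typ b m p$; the induction hypotheses and compatibility \rul d give $\teq {a\cdot b} {\clean a \cdot \clean b} n p$, and I split on whether either factor normalises to $0$, invoking \rul{dz} or \rul{zd} to rewrite $\clean a \cdot \clean b$ to $0 = \clean{a\cdot b}$, or otherwise using reflexivity on the strict product.

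The main obstacle I anticipate is a bookkeeping subtlety rather than a deep one: the annihilation rules \rul{dz} and \rul{zd} are stated with mismatched intermediate types—$\teq {a\cdot 0} 0 n p$ needs some witness $\typ 0 m p$ and $\teq {0\cdot a} 0 p m$ needs $\typ 0 p n$—so when I collapse a product one factor of which has normalised to $0$, I must make sure the type of the annihilator factor is consistent with the intermediate object $m$ extracted from inversion. Because $0$ is polymorphic by rule \rul{Tz}, this always works out, but the proof must check that the typing premises required by \rul{dz}/\rul{zd} are precisely those delivered by inverting \rul{Td}, so that the resulting typed equality genuinely sits at $\h n p$. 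Once this type-threading is handled carefully, the confluence of the rewriting system guarantees $\clean{a\cdot b}$ is determined by the normal forms of the factors, and the induction closes.
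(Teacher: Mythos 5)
Your proof is correct: the paper itself states this lemma without a written proof (deferring to the Coq development), and the structural induction you give—computing $\clean{(a+b)}$ and $\clean{(a\cdot b)}$ compositionally from $\clean a$ and $\clean b$, then discharging the $0$ cases with \rul{pz}, \rul{pc}, \rul{dz}, \rul{zd} and transitivity—is exactly the intended argument. Your observation that the polymorphic typing of $0$ (rule \rul{Tz}) is what makes the type-threading in the annihilation cases go through is the right point to check, and it does go through.
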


\begin{lem}
  \label{lem:s:tinj}
  For all strict terms $a$ such that $\typ a n m$ and $\typ a {n'}
  {m'}$, we have $n=n'$ iff $m=m'$.
\end{lem}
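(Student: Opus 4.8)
The plan is to reduce the statement to the $0$-free case via the normalisation of Definition~\ref{def:s:clean}, and then to mimic the proof of Lemma~\ref{lem:m:tinj}. First I would apply Lemma~\ref{lem:s:eqclean} to both typing hypotheses: from $\typ a n m$ and $\typ a {n'} {m'}$ we obtain $\typ {\clean a} n m$ and $\typ {\clean a} {n'} {m'}$. Since $a$ is strict, $\clean a\neq 0$; and since $\clean a$ is a normal form for the rewriting system, it contains no occurrence of $0$ at all: any $0$ occurring below a $+$ or a $\cdot{}$ would expose one of the redexes $0+b$, $b+0$, $0\cdot{}b$, $b\cdot{}0$, while a $0$ at the root is excluded by strictness. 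Thus it suffices to establish the injectivity property for $0$-free terms, applied to $\clean a$.

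Then I would prove, by structural induction on a $0$-free term $t$, that $\typ t n m$ and $\typ t {n'} {m'}$ imply that $n=n'$ iff $m=m'$. The cases are guided by the last typing rule. For a variable $x$, rule \rul{Tv} forces $(n,m)=\Gamma(x)=(n',m')$, so both types coincide. For $t=1$, rule \rul{To} gives $n=m$ and $n'=m'$, whence the equivalence follows by transitivity of equality on objects. For a product $t=b\cdot{}c$, rule \rul{Td} yields an intermediate object $p$ with $\typ b n p$ and $\typ c p m$, and likewise some $p'$ with $\typ b {n'} {p'}$ and $\typ c {p'} {m'}$; since $b$ and $c$ are again $0$-free, the induction hypotheses give that $n=n'$ iff $p=p'$, and that $p=p'$ iff $m=m'$, and chaining these equivalences concludes. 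For a sum $t=b+c$, rule \rul{Tp} forces both summands to carry the very type of the sum, so $\typ b n m$ and $\typ b {n'} {m'}$, and the induction hypothesis on $b$ gives the result directly.

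The crux of the argument is the preliminary reduction to $\clean a$: working with arbitrary strict terms directly is awkward because strictness is not inherited by subterms (for instance $x+0\cdot{}y$ is strict although its subterm $0\cdot{}y$ is not), so a naive structural induction would break in the product and sum cases. Passing to the $0$-free normal form removes the annihilator entirely and restores exactly the rigid situation of the monoid case, after which the induction is routine.
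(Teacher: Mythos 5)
Your proof is correct. The paper states this lemma without an explicit proof (the details are delegated to the Coq development), but your argument is sound: Lemma~\ref{lem:s:eqclean} transfers both typings to $\clean a$, a non-zero normal form is indeed $0$-free (any internal occurrence of $0$ would sit under a $+$ or a $\cdot{}$ and expose a redex), and the structural induction on $0$-free terms goes through exactly as in the monoid case, the extra \rul{Tp} case being immediate since both summands share the type of the sum.

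One small remark: your justification for the detour through $\clean a$ slightly overstates the obstacle. A direct induction on strict terms does not actually break: for a product $b\cdot{}c$ strictness \emph{is} inherited by both factors (if either normalises to $0$ so does the product), and for a sum $b+c$ at least one summand must be strict, which is enough because \rul{Tp} gives that summand the full type $\h n m$ and $\h {n'}{m'}$. So the direct route works with a mildly more careful case analysis; your normalisation-based reduction is a clean and equally valid alternative that makes the induction hypothesis uniformly available. Either way, the lemma is established.
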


We can then define a notion of strict equality judgement, where the
annihilation laws are not allowed:
\begin{defi}
  \label{def:s:strict}
  We let $\steq[\_]\_\_\_\_$ denote the \emph{strict equality}
  judgement obtained by removing rules \rul{dz} and \rul{zd}, and
  replacing rules \rul{dp} and \rul{pd} with the following variants,
  where the factor has to be strict. %
  \begin{mathpar}
    \inferrule*[Right=dp$^+$]{\typ a n m \and \typ{b,c}m p \and\clean a\neq 0}
    {\steq {a\cdot{}(b+c)} {a\cdot{}b+a\cdot{}c} n p} \\
    \inferrule*[Right=pd$^+$]{\typ a n m \and\typ{b,c}p n \and\clean a\neq 0}
    {\steq {(b+c)\cdot{}a} {b\cdot{}a+c\cdot{}a} p m} 
  \end{mathpar}
\end{defi}

Using the same methodology as previously, one easily obtain the
untyping theorem for strict equality judgements.

\begin{lem}
  \label{lem:s:teqtyp}
  If $\speq a b$; then for all $n,m$, we have $\typ a n m$ iff $\typ b
  n m$.
\end{lem}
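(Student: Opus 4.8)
The plan is to mimic the proof of Lemma~\ref{lem:m:teqtyp}, proceeding by induction on the derivation of $\speq a b$. The equivalence rules pose no difficulty: for \rul s the two directions of the bi-implication simply swap, and for \rul t we have $\speq a b$ and $\speq b c$, so the two induction hypotheses give $\typ a n m$ iff $\typ b n m$ and $\typ b n m$ iff $\typ c n m$, which we chain. The congruence rules \rul d and \rul p are handled by inverting the relevant typing rule (\rul{Td} or \rul{Tp}) on one side, rewriting the types of the immediate subterms using the induction hypotheses, and reassembling with the same rule on the other side; here the induction hypotheses are invoked precisely at the types exposed by inversion.

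It remains to treat the axiom schemes, for which I would directly analyse the typing derivations on both sides. The trivial axioms \rul o and \rul z hold by reflexivity of the bi-implication. For \rul{od} and \rul{do} ($1\cdot a=a$ and $a\cdot 1=a$), inverting \rul{Td} exposes a type $\h n p$ for the unit, and since \rul{To} forces square types we obtain $p=n$, matching the type of $a$; the converse direction reinserts the unit using \rul{To}. The associativity and commutativity axioms \rul{da}, \rul{pa}, \rul{pc} impose the same typing constraints on their subterms on both sides, so both members admit exactly the same types; for \rul{pz} ($a+0=a$) one uses the polymorphic rule \rul{Tz} to supply the type $\h n m$ to $0$ whenever $a$ has type $\h n m$, and \rul{Tp}-inversion for the converse.

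The main obstacle is the pair of strict distributivity rules \rul{dp$^+$} and \rul{pd$^+$}, and this is precisely where the side-condition $\clean a\neq 0$ earns its keep. Consider \rul{dp$^+$}, concluding $\speq{a\cdot(b+c)}{a\cdot b+a\cdot c}$ with $a$ strict. The left-to-right direction is routine: inverting \rul{Td} and \rul{Tp} on $\typ{a\cdot(b+c)} n m$ yields $\typ a n p$, $\typ b p m$, $\typ c p m$, whence $\typ{a\cdot b+a\cdot c} n m$ by \rul{Td} and \rul{Tp}. The delicate direction is the converse: from $\typ{a\cdot b+a\cdot c} n m$, inversion only gives $\typ a n p$, $\typ b p m$ and $\typ a n q$, $\typ c q m$ for \emph{possibly distinct} $p,q$. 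Because $a$ is strict and carries both types $\h n p$ and $\h n q$, Lemma~\ref{lem:s:tinj} forces $p=q$; we may then form $\typ{b+c} p m$ with \rul{Tp} and conclude $\typ{a\cdot(b+c)} n m$ with \rul{Td}. The rule \rul{pd$^+$} is symmetric, applying Lemma~\ref{lem:s:tinj} on the codomain side instead. This appeal to strict type-injectivity is exactly what fails for the unrestricted rule \rul{dp} when $a=0$ (the motivating counterexample given before Definition~\ref{def:s:clean}), which is why the annihilation laws had to be banned from the strict judgement.
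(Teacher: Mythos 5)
Your proposal is correct and follows exactly the route the paper intends: the paper omits the proof with the remark that it uses ``the same methodology as previously'' (induction on the derivation, as for Lemma~\ref{lem:m:teqtyp}), and its subsequent comment that the patched rules \rul{dp$^+$} and \rul{pd$^+$} are what make the lemma go through is precisely the point you identify and discharge via Lemma~\ref{lem:s:tinj}. No gaps.
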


\begin{prop}
  \label{prp:s:suntype}
  If $\speq a b$  and $\typ {a,b} n m$, then $\steq a b n m$.
\end{prop}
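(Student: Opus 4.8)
The plan is to mimic exactly the structure of the proof of Theorem~\ref{thm:m:untype}, since the statement of Proposition~\ref{prp:s:suntype} is the strict-equality analogue of that monoid result, and the strict setting has been engineered precisely so that the two auxiliary lemmas we relied on before are available again: Lemma~\ref{lem:s:teqtyp} plays the role of Lemma~\ref{lem:m:teqtyp}, and Lemma~\ref{lem:s:tinj} plays the role of Lemma~\ref{lem:m:tinj} (the latter now restricted to strict terms, which is why the distributivity rules were weakened to \rul{dp$^+$} and \rul{pd$^+$}). Concretely, I would induct on the derivation of $\speq a b$, and for the reflexivity/symmetry/transitivity rules and the structural congruence rules the argument is the same as in the monoid case.

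First I would dispatch the transitivity case exactly as before: given $\speq a b$, $\speq b c$ and $\typ {a,c} n m$, Lemma~\ref{lem:s:teqtyp} gives $\typ b n m$, the two induction hypotheses give $\steq a b n m$ and $\steq b c n m$, and rule \rul t (which is shared by both equality judgements) concludes. The two congruence cases, for $\cdot$ (rule \rul d) and for $+$ (rule \rul p), are the crux and are where strictness earns its keep. For the product case, from $\typ {a\cdot b, a'\cdot b'} n m$ I invert the typing rules to obtain intermediate objects $p,q$ with $\typ a n p$, $\typ b p m$, $\typ{a'}n q$, $\typ{b'}q m$; to glue the induction hypotheses I must show $p=q$, and here I would use Lemma~\ref{lem:s:teqtyp} together with the \emph{strict} injectivity Lemma~\ref{lem:s:tinj}. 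The subtlety is that Lemma~\ref{lem:s:tinj} only applies to strict terms, so I expect this to be the main obstacle: I must first argue that $a$ (hence $a'$, which shares its types by Lemma~\ref{lem:s:teqtyp}) may be assumed strict. If instead $\clean a = 0$ or $\clean b = 0$, then $\clean{a\cdot b}=0$, and I claim the derivation could not have produced $a\cdot b$ on one side without the degenerate factor already being forced to match on the other side — the removal of rules \rul{dz},\rul{zd} and the strictness side-condition on \rul{dp$^+$},\rul{pd$^+$} is exactly what rules out the problematic untyped proofs (like $0=0\cdot a\cdot a$) that motivated the whole detour.

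For the addition congruence case, from $\typ{a+b,a'+b'}n m$ inversion of \rul{Tp} directly yields $\typ{a,b}n m$ and $\typ{a',b'}n m$ at the same type, with no intermediate object to reconcile, so the two induction hypotheses combine immediately via rule \rul p; this case is routine. The remaining base cases (variables \rul v, units \rul o, zero \rul z, and the monoid/semiring axiom instances \rul{od},\rul{do},\rul{da},\rul{pz},\rul{pc},\rul{pa},\rul{dp$^+$},\rul{pd$^+$}) are handled by observing that each already carries explicit type annotations in its strict form, so the corresponding typed instance can be re-derived directly once the ambient type is pinned down, again using Lemma~\ref{lem:s:teqtyp} where two subterms must be shown to share a type.

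In summary, the proof is a straightforward induction following the template of Theorem~\ref{thm:m:untype}, and the only genuinely new work lies in the multiplicative congruence step, where reconciling the two intermediate objects $p$ and $q$ requires the strict injectivity Lemma~\ref{lem:s:tinj}, which in turn forces a small preliminary case analysis establishing that the relevant factors are strict — a case analysis that is made possible precisely by the deletion of the annihilation rules \rul{dz}, \rul{zd} in the strict judgement.
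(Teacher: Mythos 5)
Your overall plan — induction on the strict derivation, with Lemma~\ref{lem:s:teqtyp} standing in for Lemma~\ref{lem:m:teqtyp} — is exactly the ``same methodology'' the paper invokes (it gives no explicit proof of Prop.~\ref{prp:s:suntype}). But the step you yourself identify as the crux does not work as you describe it. In the \rul d case you insist on reconciling the two intermediate objects $p$ and $q$ via Lemma~\ref{lem:s:tinj}, which forces you to assume the factors are strict; for the non-strict subcase you then assert that ``the derivation could not have produced $a\cdot b$ on one side without the degenerate factor already being forced to match on the other side''. That is not an argument: nothing in the strict judgement prevents, say, $\speq{x\cdot 0}{x\cdot 0}$ with both sides typed at $\h{n}{m}$ via different intermediate objects, and no matching of degenerate factors is ``forced''. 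As stated, this subcase is a genuine gap.

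The repair is simpler than what you attempt, and makes Lemma~\ref{lem:s:tinj} unnecessary in this case: decompose only the typing of $a\cdot b$, obtaining $\typ a n p$ and $\typ b p m$ for some $p$, and then use Lemma~\ref{lem:s:teqtyp} to transfer these very types to the other side, $\typ {a'} n p$ and $\typ {b'} p m$. The induction hypotheses now apply directly at types $\h n p$ and $\h p m$ (they only require a \emph{common} type for each pair, not agreement with the second typing derivation), and rule \rul d concludes — no case split on strictness needed. A small further correction: the strictness side-conditions on \rul{dp$^+$} and \rul{pd$^+$} are there to make Lemma~\ref{lem:s:teqtyp} true (see the counter-example $\speq {0\cdot{}(x+y)} {0\cdot{}x + 0\cdot{}y}$ given after the proposition), not to restrict Lemma~\ref{lem:s:tinj} to strict terms; the latter restriction comes from $0$ being polymorphically typed. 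The rest of your cases (transitivity via Lemma~\ref{lem:s:teqtyp}, the additive congruence, and the explicitly typed axiom instances) are fine.
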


\noindent Note that the patched rules for distributivity, \rul{dp$^+$}
and \rul{pd$^+$} are required in order to obtain
Lemma~\ref{lem:s:teqtyp}: if $a$ was not required to be strict, we
would have $\speq {0\cdot{}(x+y)} {0\cdot{}x + 0\cdot{}y}$, and the
right-hand side can be typed in environment $\Gamma=\{x\mapsto
(3,2),~y\mapsto (4,2)\}$ while the left-hand side cannot.

\medskip %
We now have to show that any equality proof can be factorised, so as
to obtain a strict equality proof relating the corresponding normal
forms:%
\begin{prop}
  \label{prp:s:factor}
  If $\peq a b$, then we have $\speq {\clean a} {\clean b}$.
\end{prop}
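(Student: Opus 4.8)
The plan is to argue by induction on the derivation of $\peq a b$, constructing in each case a derivation of $\speq{\clean a}{\clean b}$. Before doing so I would record two facts about the strict judgement. First, \emph{strict equality is reflexive}: $\speq a a$ holds for every term $a$, by a routine structural induction on $a$ using \rul{v}, \rul{o}, \rul{z} as base cases and the compatibility rules \rul{d}, \rul{p} for composite terms. Second, and crucially, \emph{strict equality preserves strictness}: if $\speq c d$, then $\clean c=0$ iff $\clean d=0$. This is proved by induction on the derivation of $\speq c d$; the only delicate rules are the distributivity variants \rul{dp$^+$} and \rul{pd$^+$}, and it is exactly their side condition $\clean a\neq 0$ on the distributed factor that forces both sides to have the same strictness status. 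I would also use that normalisation is idempotent ($\clean{\clean a}=\clean a$, since $\clean a$ is a normal form of a convergent system), and that the normal form of a composite is governed by the strictness of its immediate subterms: $\clean{a\cdot b}=0$ exactly when $\clean a=0$ or $\clean b=0$ (otherwise $\clean{a\cdot b}=\clean a\cdot\clean b$), while $\clean{a+b}$ equals $\clean a$, or $\clean b$, or $\clean a+\clean b$, according to which of $\clean a,\clean b$ reduce to $0$.

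With these facts in hand, the structural rules are immediate: transitivity \rul{t} and symmetry \rul{s} are retained verbatim in the strict system, so they follow directly from the induction hypotheses. For the compatibility rules \rul{d} and \rul{p}, the induction hypotheses give $\speq{\clean a}{\clean{a'}}$ and $\speq{\clean b}{\clean{b'}}$, and preservation of strictness guarantees that $\clean a,\clean{a'}$ (resp. $\clean b,\clean{b'}$) are simultaneously $0$ or simultaneously strict; hence the normal forms of the two sides fall into the same case of the description above, and I reassemble the derivation with \rul{d} / \rul{p} when the relevant subterms are strict, or close with reflexivity ($\speq 0 0$) when they vanish.

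For the axioms I would compute both normal forms and split on the strictness of the subterms. The annihilation axioms \rul{dz} and \rul{zd}, absent from the strict system, become trivial, since both sides normalise to $0$. The remaining monoid and commutative-monoid axioms (\rul{od}, \rul{do}, \rul{da}, \rul{pz}, \rul{pc}, \rul{pa}) either reduce to an instance of the same, unmodified strict axiom applied to the normal forms of the subterms, or collapse to a reflexivity obligation discharged by the first auxiliary fact. The genuinely interesting axioms are distributivity, \rul{dp} and \rul{pd}: when the distributed factor normalises to $0$, or when both summands do, both sides normalise to $0$; when exactly one summand survives, both sides normalise to the same product and reflexivity applies; and only when all three subterms are strict do I appeal to the patched rule \rul{dp$^+$} (resp. \rul{pd$^+$}), whose side condition is met thanks to idempotence of normalisation.

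The main obstacle is precisely this interaction between distributivity and annihilation: the strict system has deliberately dropped \rul{dz}, \rul{zd} and weakened \rul{dp}, \rul{pd}, so the entire content of the statement is that every use of annihilation in the untyped proof can be absorbed into normalisation while every remaining use of distributivity lands on a strict factor. The two auxiliary facts --- reflexivity and, above all, preservation of strictness under $\speq{}{}$ --- are exactly what make the compatibility and distributivity cases go through; the rest is the bookkeeping of the case analysis.
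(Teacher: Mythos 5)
Your proof is correct and takes essentially the same approach as the paper's: induction on the untyped derivation, with reflexivity of the strict judgement discharging the collapsing cases and a case analysis on strictness of the subterms in the compatibility and distributivity cases. The only (harmless) difference is that the paper establishes its strictness-invariance fact $(\dagger)$ directly for the untyped judgement $\peq a b$ by a separate prior induction, whereas you state it for the strict judgement and recover the needed information from the induction hypotheses via idempotence of normalisation.
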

\begin{proof}
  We first show by induction that whenever $\peq a b$, $a$ is
  strict iff $b$ is strict $(\dagger)$. %
  Then we proceed by induction on the derivation $\peq a b$, we detail
  only some cases:
  \begin{enumerate}[\rul{dz}]
  \item[\rul d] we have $\speq{\clean a}{\clean {a'}}$ and
    $\speq{\clean b}{\clean {b'}}$ by induction; we need to show that
    $\speq{\clean {(a\cdot{}b)}}{\clean {(a'\cdot{}b')}}$. If one of
    $a,a',b,b'$ is not strict, then $\clean {(a\cdot{}b)}=\clean
    {(a'\cdot{}b')}=0$, thanks to $(\dagger)$, so that we are done;
    otherwise, $\clean {(a\cdot{}b)}=\clean a\cdot{}\clean b$, and
    $\clean {(a'\cdot{}b')}=\clean {a'}\cdot{}\clean {b'}$, so that we
    can apply rule \rul d.
  \item[\rul{dz}] trivial, since $\clean{(a\cdot{}0)}=0$.
  \item[\rul{dp}] we need to show that $\speq
    {\clean{(a\cdot{}(b+c))}} {\clean{(a\cdot{}b+a\cdot{}c)}}$; if one
    of $a,b,c$ is not strict, both sides reduce to the same term, so
    that we can apply Lemma~\ref{lem:m:sanity}$(i)$ (which holds in
    this setting); otherwise we have $\clean{(a\cdot{}(b+c))}=\clean
    a\cdot{}(\clean b+\clean c)$ and $\clean{(a\cdot{}b+a\cdot{}c)} =
    \clean a\cdot{}\clean b+\clean a\cdot{}\clean c$, so that we can
    apply rule \rul{dp$^+$}.\qedhere
  \end{enumerate}
\end{proof}

\smallskip\noindent
We finally obtain the untyping theorem by putting all together:

\begin{thm}
   \label{thm:s:untype}
   In semirings, for all $a,b,n,m$ such that $\typ {a,b} n m$, we
   have $\peq a b$ iff $\teq a b n m$.
\end{thm}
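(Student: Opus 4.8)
The plan is to establish the two implications separately, since one is routine and one carries all the content. For the backward direction, assume $\teq a b n m$. Every inference rule of the typed equality judgement projects, upon erasing all type annotations (equivalently, specialising to the singleton object set so that $\Gamma=\Hamma$), to the corresponding rule of the untyped judgement: the typing side-conditions become vacuously satisfiable, and the shape of each rule is preserved. Hence a straightforward induction on the derivation of $\teq a b n m$ yields $\peq a b$. (One does not even need the hypothesis $\typ{a,b} n m$ here, as typeability would in any case follow from the sanity lemma, Lemma~\ref{lem:m:sanity}$(ii)$, which holds in this setting.)

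For the forward direction, assume $\peq a b$ and $\typ{a,b} n m$; I would assemble the preceding results into a single chain of reductions through the annihilator-free world. First, Proposition~\ref{prp:s:factor} turns the untyped proof into a strict one on normal forms, giving $\speq{\clean a}{\clean b}$. Next, Lemma~\ref{lem:s:eqclean} does the type bookkeeping: from $\typ a n m$ and $\typ b n m$ we obtain $\typ{\clean a} n m$ and $\typ{\clean b} n m$, together with the typed equalities $\teq a {\clean a} n m$ and $\teq b {\clean b} n m$ linking each term to its normal form. Having both $\speq{\clean a}{\clean b}$ and $\typ{\clean a,\clean b} n m$, I would invoke the strict untyping theorem, Proposition~\ref{prp:s:suntype}, to conclude $\steq{\clean a}{\clean b} n m$. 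Finally, strict equality embeds into full typed equality — every rule of $\steq[\_]\_\_\_\_$ is either shared with $\teq[\_]\_\_\_\_$ or, in the case of \rul{dp$^+$} and \rul{pd$^+$}, an instance of a full rule carrying one extra premise — so $\steq{\clean a}{\clean b} n m$ entails $\teq{\clean a}{\clean b} n m$. Combining $\teq a {\clean a} n m$, this $\teq{\clean a}{\clean b} n m$, and $\teq{\clean b} b n m$ by two applications of the transitivity rule \rul t (using \rul s to reverse the second equality from Lemma~\ref{lem:s:eqclean}) yields the desired $\teq a b n m$.

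The conceptual obstacle — that some untyped proofs use the annihilation laws in ways that cannot be decorated with consistent types, as in the $0=0\cdot a\cdot a=0$ example — has already been isolated and discharged by the factorisation result, Proposition~\ref{prp:s:factor}, so the remaining difficulty in this theorem is purely one of gluing. The point requiring care is the type accounting: I must make sure that the normal forms $\clean a$ and $\clean b$ are handed to Proposition~\ref{prp:s:suntype} at the \emph{same} type $\h n m$, which is exactly what Lemma~\ref{lem:s:eqclean} guarantees, and that the embedding of $\steq[\_]\_\_\_\_$ into $\teq[\_]\_\_\_\_$ is legitimate, which follows from the rule-by-rule inclusion noted above. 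No further case analysis is needed, as the hard work lives entirely in Propositions~\ref{prp:s:factor} and~\ref{prp:s:suntype}.
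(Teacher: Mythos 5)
Your proof is correct and follows essentially the same route as the paper: reduce to normal forms via Lemma~\ref{lem:s:eqclean}, obtain $\speq{\clean a}{\clean b}$ from Proposition~\ref{prp:s:factor}, lift it to $\steq{\clean a}{\clean b} n m$ by Proposition~\ref{prp:s:suntype}, embed strict equality into full typed equality, and glue with transitivity and symmetry. You merely spell out the embedding of $\steq[\_]\_\_\_\_$ into $\teq[\_]\_\_\_\_$ and the type bookkeeping more explicitly than the paper does.
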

\begin{proof}
  The reverse implication is straightforward; we prove the direct one.
  By Lemma~\ref{lem:s:eqclean}, using the transitivity and symmetry
  rules, it suffices to show $\teq {\clean a} {\clean b} n m$. This is
  clearly the case whenever $\steq {\clean a} {\clean b} n m$, which
  follows from Props.~\ref{prp:s:factor} and~\ref{prp:s:suntype}.
\end{proof}

\subsection{Kleene algebras}~\medskip
\label{ss:ka} 

Kleene algebras are idempotent semirings equipped with a star
operation~\cite{Kle56}; they admit several important models, among
which binary relations and \emph{regular languages} (the latter is
complete~\cite{Krob91a,Koz94b}; since equality of regular languages is
decidable, so is the equational theory of Kleene algebras).  Like
previously, we type Kleene algebras in a natural way, where star
operates on ``square'' types: types of the form $n\to n$, i.e., square
matrices or homogeneous binary relations.

\begin{defi}
  \label{def:kleene}
  We define \emph{typed Kleene algebras} by the signature
  $\{\cdot{}_2,+_2,\star_1,1_0,$ $0_0\}$, together with the following
  rules, in addition that from Defs.~\ref{def:monoid}
  and~\ref{def:semiring}, and~§\ref{sec:defs}, and where $\tleq a b n
  m$ is an abbreviation for $\teq {a+b} b n m$.
  \begin{mathpar}
    \inferrule*[Right=Ts]{\typ a n n}{\typ {\tstar a} n n} \and
    \inferrule*[Right=s]{\teq a b n n}{\teq {\tstar a} {\tstar b} n n} \and
    \inferrule*[Right=pi]{\typ a n m}{\teq {a+a} a n m} \and
    \inferrule*[Right=sp]{\typ a n n}{\teq {1+a\cdot{}\tstar a} {\tstar a} n n} \and
    \inferrule*[Right=sl]{\tleq {a\cdot{}b} b n m}{\tleq {\tstar a\cdot{}b} b n m} \and
    \inferrule*[Right=sr]{\tleq {b\cdot{}a} b n m}{\tleq {b\cdot{}\tstar a} b n m} 
  \end{mathpar}
\end{defi}
\noindent 
The untyped version of this axiomatisation is that from
Kozen~\cite{Koz94b}: axiom \rul{pi} corresponds to idempotence of $+$,
the three other rules define the star operation (we omitted the mirror
image of axiom \rul{sp}, which is derivable from the other
ones~\cite{atbr:itp}).  Note that due to rules~\rul{sl} and
\rul{sr}, we are no longer in a purely equational setting; indeed, the
algebra of regular events is not finitely based~\cite{redko64}.

\medskip

The proof of the untyping theorem for Kleene algebras is obtained
along the lines of the proof for non-commutative semirings. We just
highlight the main differences here, complete proofs are available as
Coq scripts~\cite{this:web}.
First, it is a simple exercise to check that the following lemma
holds:
\begin{lem}
  For all $n$, we have $\teq {\tstar 0} 1 n n$.
\end{lem}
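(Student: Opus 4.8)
The plan is to prove the identity $\teq{\tstar 0}{1}{n}{n}$ by unfolding the definition of the star axioms at the square type $\h n n$. The key observation is that $\tstar 0$ should collapse to the unit because $0$ is the annihilator, and the star-fixpoint axiom \rul{sp} together with the left-induction axiom \rul{sl} pin down $\tstar 0$ uniquely.

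First I would establish that $\tleq{1}{\tstar 0}{n}{n}$. This follows directly from the fixpoint axiom \rul{sp} applied with $a\mathrel{:=}0$: since $\typ{0}{n}{n}$, we obtain $\teq{1+0\cdot{}\tstar 0}{\tstar 0}{n}{n}$. Using the annihilation law \rul{zd} we have $\teq{0\cdot{}\tstar 0}{0}{n}{n}$, so by congruence \rul{d}, \rul{p} and the unit law \rul{pz} we simplify $1+0\cdot{}\tstar 0$ to $1$, yielding $\teq{1}{\tstar 0}{n}{n}$ — which is in fact equality, not just the order, and already contains the hard direction.

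Indeed, once we have $\teq{1}{\tstar 0}{n}{n}$ outright, the lemma follows by symmetry \rul{s}, so strictly speaking the induction axioms are not even needed; the annihilator does all the work. The main obstacle is therefore not conceptual but purely bookkeeping: one must keep every step correctly typed at $\h n n$, threading the appropriate typing hypotheses (here everything is square, so $\typ{0}{n}{n}$, $\typ{1}{n}{n}$, and $\typ{\tstar 0}{n}{n}$ via \rul{Ts}) through each use of \rul{d}, \rul{p}, \rul{zd}, \rul{pz}, and the transitivity/symmetry rules. Since the term is homogeneous, Lemma~\ref{lem:m:tinj} poses no difficulty and no strictness side-conditions arise. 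I would write the derivation as a short chain $\teq{\tstar 0}{1+0\cdot{}\tstar 0}{n}{n}$ (by \rul{sp} and \rul{s}), then $\teq{1+0\cdot{}\tstar 0}{1+0}{n}{n}$ (by \rul{zd}, \rul{p}), then $\teq{1+0}{1}{n}{n}$ (by \rul{pz}), and close with transitivity \rul{t}.
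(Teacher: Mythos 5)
Your proof is correct and is exactly the ``simple exercise'' the paper has in mind (the paper omits the proof entirely): instantiate \rul{sp} at $a:=0$, kill the product with \rul{zd}, remove the unit of $+$ with \rul{pz}, and chain with \rul{p}, \rul{s}, \rul{t}, all steps being trivially typeable at $\h n n$ since $0$ and $1$ are polymorphic. Your own observation that \rul{sl} is not needed is right, and the passing mention of \rul{d} is redundant since \rul{zd} already yields $\teq{0\cdot\tstar 0}{0}{n}{n}$ directly.
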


\noindent This allows us to extend the rewriting system from
Def.~\ref{def:s:clean} with the rule $\tstar 0 \to 1$, so that the
annihilator can also be removed in this setting. In particular, we 
obtain:
\begin{lem}
  \label{lem:k:eqclean}
  If $\typ a n m$, then $\typ {\clean a} n m$ and $\teq a {\clean a} n
  m$.
\end{lem}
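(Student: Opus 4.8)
The plan is to follow the pattern used for the semiring analogue, Lemma~\ref{lem:s:eqclean}: I show that every single rewrite step preserves typing while producing a derivable typed equality, and then close this property under contexts and under the length of the reduction sequence leading from $a$ to its normal form $\clean a$. Since the rule $\tstar 0\to 1$ keeps the system convergent, $\clean a$ is well defined, and the only genuinely new ingredient beyond the semiring proof is the treatment of the star.

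The key per-step claim is: if $a$ reduces to $a'$ in one step at the top level and $\typ a n m$, then $\typ {a'} n m$ and $\teq a {a'} n m$. The four rules inherited from Def.~\ref{def:s:clean} are handled exactly as in the semiring case. The contractions $a+0\to a$ and $0+a\to a$ use rule \rul{pz} (together with \rul{pc} for the second one), with the $0$ in the redex assigned the common type $\h n m$ via rule \rul{Tz}; the contractions $0\cdot a\to 0$ and $a\cdot 0\to 0$ use rules \rul{zd} and \rul{dz}, noting that the resulting $0$ is polymorphic and so can carry the type of the redex. The only new rule is $\tstar 0\to 1$: here inversion on \rul{Ts} forces the redex $\tstar 0$ to be square-typed, say $\h n n$, the replacement $1$ carries type $\h n n$ by \rul{To}, and the equality $\teq {\tstar 0} 1 n n$ is precisely the identity established just above.

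Next I would lift this claim from the top level to an arbitrary position inside a term. Each top-level step $s\to t$ satisfies $\typ s n m$, $\typ t n m$ and $\teq s t n m$ for the appropriate type, so replacing $s$ by $t$ in any well-typed context preserves the type of the whole term (the typing rules are syntax-directed) and yields the corresponding typed equality by one application of the relevant compatibility rule---\rul{d}, \rul{p}, or, in the newly added case, \rul{s}---with reflexivity (Lemma~\ref{lem:m:sanity}$(i)$) supplying the premises for the untouched parts of the context. Chaining these one-step facts along the reduction sequence from $a$ to $\clean a$ by induction on its length, using rule \rul{t} to accumulate the equalities, then gives both $\typ {\clean a} n m$ and $\teq a {\clean a} n m$.

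I expect no serious obstacle: the four inherited rules are routine, and the new work is confined to the star case. The one point deserving care is that the two $0$-propagating rules change the syntactic shape of the rewritten subterm while they must preserve the type of the enclosing term; this is exactly what the polymorphic typing of $0$ (rule \rul{Tz}) guarantees, which is why the context-lifting step goes through unchanged from the semiring development.
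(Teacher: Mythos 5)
Your proof is correct and follows essentially the route the paper intends: it leaves this lemma unproved in the text, pointing to the semiring case and the Coq development, with the only new ingredient being the rule $\tstar 0\to 1$ justified by the preceding lemma $\teq{\tstar 0}{1} n n$ --- exactly the step you isolate and handle via inversion on \rul{Ts}. The per-step/context/chaining decomposition and the reliance on the polymorphic typing of $0$ are the standard argument already implicit in Lemma~\ref{lem:s:eqclean}, so nothing is missing.
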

\begin{lem}
  \label{lem:k:tinj}
  For all strict terms $a$ such that $\typ a n m$ and $\typ a {n'}
  {m'}$, we have $n=n'$ iff $m=m'$.
\end{lem}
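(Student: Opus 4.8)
The plan is to mirror the proof of Lemma~\ref{lem:s:tinj}, proceeding by induction on the structure of the strict term $a$ (equivalently, on one of the two typing derivations), and to treat the new constructor $\tstar{\cdot}$ as an easy additional case. Throughout, I use that strictness descends to the relevant subterms, as read off from the rewriting system of Def.~\ref{def:s:clean} extended with the rule $\tstar 0 \to 1$.

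First I would dispatch the base and ``square'' cases. If $a = x$ is a variable, then rule \rul{Tv} forces $(n,m) = \Gamma(x) = (n',m')$, so both $n=n'$ and $m=m'$ hold and the equivalence is trivial. If $a = 1$, rule \rul{To} forces $n = m$ and $n' = m'$, whence $n=n' \iff m=m'$ directly. The case $a = \tstar c$ is handled in the same spirit: rule \rul{Ts} only assigns square types, so from $\typ{\tstar c} n m$ and $\typ{\tstar c}{n'}{m'}$ we get $m = n$ and $m' = n'$, and again $n=n' \iff m=m'$ is immediate, with no appeal to the induction hypothesis. Note that the case $a = 0$ cannot occur, since $0$ is not strict.

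The remaining cases are the two binary constructors, where strictness does the work. For $a = b \cdot c$, inverting the typing derivations (rule \rul{Td}) yields an intermediate object $p$ with $\typ b n p$, $\typ c p m$, and an intermediate object $p'$ with $\typ b {n'}{p'}$, $\typ c {p'}{m'}$. Since $\clean{(b\cdot c)} \neq 0$ forces both $\clean b \neq 0$ and $\clean c \neq 0$, both $b$ and $c$ are strict, so the induction hypotheses apply: $n=n' \iff p=p'$ (from $b$) and $p=p' \iff m=m'$ (from $c$), and chaining these biconditionals yields $n=n' \iff m=m'$. For $a = b + c$, inversion (rule \rul{Tp}) gives $\typ{b,c} n m$ and $\typ{b,c}{n'}{m'}$; since $\clean{(b+c)} \neq 0$ forces at least one of $\clean b, \clean c$ to be nonzero, at least one summand is strict, and applying the induction hypothesis to that summand gives the desired equivalence.

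The hard part will be the product case: one must carefully justify that strictness of $a = b\cdot c$ genuinely descends to \emph{both} factors before invoking the induction hypotheses, and then chain the two biconditionals through the common intermediate objects $p$ and $p'$. By contrast, the new star case is deceptively simple---it contributes nothing beyond the observation that $\tstar{\cdot}$ is only ever typed squarely, so the equivalence holds for it without recursion.
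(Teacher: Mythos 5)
Your proof is correct and follows exactly the route the paper intends: the paper gives no explicit proof of this lemma (it defers to the Coq development, noting only that the Kleene-algebra case follows the semiring one), and your structural induction---with strictness propagating to both factors of a product, to at least one summand of a sum, and with $\tstar{\cdot}$ dispatched trivially because rule \rul{Ts} only assigns square types---is precisely the argument behind Lemma~\ref{lem:s:tinj} extended with the star case.
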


\noindent Let $\steq[\_]\_\_\_\_$ denote the \emph{strict equality}
judgement obtained like previously (Def.~\ref{def:s:strict}), and
where we moreover adapt rules \rul{sl} and \rul{sr} so that $b$ is
required to be strict:
\begin{mathpar}
  \inferrule*[Right=sl$^+$]{\stleq {a\cdot{}b} b n m \and \clean b\neq 0}%
                           {\stleq {\tstar a\cdot{}b} b n m} \and
  \inferrule*[Right=sr$^+$]{\stleq {b\cdot{}a} b n m \and \clean b\neq 0}%
                           {\stleq {b\cdot{}\tstar a} b n m} 
\end{mathpar}

\noindent These patched rules \rul{sl$^+$} and \rul{sr$^+$} are
required to obtain the following counterpart to
Lemma~\ref{lem:s:teqtyp}: otherwise, we would have $\spleq {\tstar
  a\cdot{}0} 0$, where the right-hand side has any type while the type
of the left-hand side is constrained by $a$.

\begin{lem}
  \label{lem:k:teqtyp}
  If $\speq a b$; then for all $n,m$, we have $\typ a n m$ iff $\typ b
  n m$.
\end{lem}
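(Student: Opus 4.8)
As for Lemma~\ref{lem:s:teqtyp}, the plan is to proceed by induction on
  the derivation of $\speq a b$, showing at each step that $a$ and $b$
  admit exactly the same types in $\Gamma$. The cases for the equivalence
  rules (reflexivity, symmetry, transitivity) and the congruence rules
  \rul d and \rul p reduce immediately to the induction hypotheses after
  inverting the relevant typing rule; the axioms
  \rul{od,do,da,pz,pc,pa,pi} and the patched distributivity rules
  \rul{dp$^+$}, \rul{pd$^+$} are treated exactly as in the semiring
  setting, the strictness side conditions guaranteeing that both sides
  share the same admissible types.

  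The genuinely new cases are the star congruence \rul s, the unfolding
  \rul{sp}, and the patched rules \rul{sl$^+$}, \rul{sr$^+$}. For \rul s,
  inversion through \rul{Ts} shows that $\typ{\tstar a} n m$ holds iff
  $n=m$ and $\typ a n n$, so the claim follows from the induction
  hypothesis on the premise. For \rul{sp}, inverting \rul{Tp}, \rul{To},
  \rul{Td} and \rul{Ts} shows that both $\typ{\tstar a} n m$ and
  $\typ{1+a\cdot{}\tstar a} n m$ are equivalent to ``$n=m$ and $\typ a n
  n$'': the unit forces $n=m$, and the inner $\tstar a$ forces the
  intermediate object to be $n$; hence the two sides have the same types.

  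The main obstacle is the case of \rul{sl$^+$} (that of \rul{sr$^+$}
  being symmetric). Here the premise is $\speq{a\cdot{}b+b} b$ with
  $\clean b\neq 0$, and, after unfolding the abbreviation, we must compare
  the types of $\tstar a\cdot{}b+b$ and $b$. The forward direction is
  immediate, since $\typ{\tstar a\cdot{}b+b} n m$ entails $\typ b n m$ by
  inverting \rul{Tp}. For the converse, assume $\typ b n m$; we must
  produce $\typ{\tstar a\cdot{}b} n m$, that is, $\typ a n n$ and $\typ b n
  m$. By the induction hypothesis applied to the premise we obtain
  $\typ{a\cdot{}b+b} n m$, hence $\typ{a\cdot{}b} n m$, so $\typ a n p$ and
  $\typ b p m$ for some $p$. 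Since $b$ is strict and both $\typ b n m$ and
  $\typ b p m$ hold, Lemma~\ref{lem:k:tinj} forces $p=n$, whence $\typ a n
  n$, as required. This use of type injectivity on the strict term $b$ is
  precisely what the strictness condition of \rul{sl$^+$} is there to
  enable.
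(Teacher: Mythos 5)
Your proof is correct and follows essentially the same route as the paper's: an induction on the strict derivation, with the only delicate case being \rul{sl$^+$} (and its mirror), where the converse direction is obtained exactly as in the paper by applying the induction hypothesis to get $\typ{a\cdot{}b+b} n m$, extracting $\typ a n p$ and $\typ b p m$, and then invoking type injectivity on the strict term $b$ (Lemma~\ref{lem:k:tinj}) to force $p=n$. The paper's proof is just a terser version of yours, dismissing \rul{sp} as straightforward and deferring the remaining cases to Lemma~\ref{lem:s:teqtyp}.
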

\begin{proof}
  Similar to the proof of Lemma~\ref{lem:s:teqtyp}. Recall that
  $\spleq a b$ is an abbreviation for $\speq {a+b} b$; the rule
  \rul{sl$^+$} is handled as follows. Suppose that $\speq {\tstar
    a\cdot{}b+b} b$ was obtained using this rule:
  \begin{iteMize}{$\bullet$}
  \item if $\typ{\tstar a\cdot{}b+b} n m$, then we necessarily have
    $\typ b n m$;
  \item conversely, if $\typ b n m$ then we have $\typ{a\cdot{}b+b} n
    m$ by induction. Therefore, there exists $p$ such that $\typ a n
    p$ and $\typ b p m$. Since $b$ was required to be strict, we can
    use Lemma~\ref{lem:k:tinj} to deduce $n=p$, $\typ a n n$, and
    finally, $\typ{\tstar a\cdot{}b+b} n m$.
  \end{iteMize}
  Rule \rul{sr$^+$} is handled symmetrically, and rule \rul{sp} is
  straightforward.
\end{proof}

\noindent The untyping theorem for strict equality follows easily:
\begin{prop}
  \label{prp:k:suntype}
  If $\speq a b$  and $\typ {a,b} n m$, then $\steq a b n m$.
\end{prop}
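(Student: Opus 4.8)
The plan is to argue by induction on the derivation of the untyped strict equality $\speq a b$, rebuilding at each step a typed derivation of $\steq a b n m$ from the hypothesis $\typ {a,b} n m$. The skeleton is identical to the proof of Theorem~\ref{thm:m:untype} and its semiring counterpart (Prop.~\ref{prp:s:suntype}); the only genuinely new work concerns the rules for $\tstar$. The recurring tool is Lemma~\ref{lem:k:teqtyp}, which transports a type across an untyped strict equality, so that the two sides of a congruence can be forced to share a common type and the induction hypotheses become applicable.

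Most cases are routine. The reflexivity rule \rul v and symmetry are immediate, and for transitivity \rul t, given $\typ {a,c} n m$ I recover $\typ b n m$ for the intermediate term by Lemma~\ref{lem:k:teqtyp} before invoking the two induction hypotheses and rule \rul t. The axiom schemes that merely promote a typing to an equality (\rul o, \rul{od}, \rul{do}, \rul{da}, \rul z, \rul{pz}, \rul{pc}, \rul{pa}, \rul{pi}, \rul{sp}), together with the strict distributivities \rul{dp$^+$} and \rul{pd$^+$}, are discharged by inverting the typing of the conclusion to supply the premises of the matching typed rule; the type-independent strictness conditions attached to \rul{dp$^+$}, \rul{pd$^+$} carry over unchanged. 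The congruence for $+$ (\rul p) needs only the obvious inversion, since $+$ preserves types. For the congruence of $\cdot$ (\rul d), inverting $\typ {a\cdot b} n m$ and $\typ {a'\cdot b'} n m$ yields $\typ a n p$, $\typ b p m$, $\typ {a'} n q$, $\typ {b'} q m$; rather than trying to prove $p=q$ (which would fail for non-strict factors, where Lemma~\ref{lem:k:tinj} does not apply), I transport types along Lemma~\ref{lem:k:teqtyp} to obtain $\typ {a'} n p$ and $\typ {b'} p m$, so that the induction hypotheses give $\steq a {a'} n p$ and $\steq b {b'} p m$, and rule \rul d concludes $\steq {a\cdot b} {a'\cdot b'} n m$.

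The new cases are those involving $\tstar$, and the iteration rules are the main obstacle. For the congruence \rul s of $\tstar$, inverting $\typ {\tstar a} n m$ through \rul{Ts} forces $n=m$ and $\typ a n n$ (and similarly $\typ b n n$), after which the induction hypothesis and the typed rule suffice. The crux is \rul{sl$^+$} (and symmetrically \rul{sr$^+$}): here $\speq {\tstar a\cdot b + b} b$ arises from $\speq {a\cdot b + b} b$ with $\clean b\neq 0$, and we are given $\typ {\tstar a\cdot b + b} n m$ together with $\typ b n m$. Successively inverting through \rul{Tp}, \rul{Td}, and \rul{Ts} pins down the square typing of the iterated factor, namely $\typ a n n$ with the matching $\typ b n m$; this is exactly the point where the special typing discipline of $\tstar$ does the essential work. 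From $\typ a n n$ and $\typ b n m$ I reassemble $\typ {a\cdot b + b} n m$, apply the induction hypothesis to obtain $\stleq {a\cdot b} b n m$, and fire the typed rule \rul{sl$^+$}. Its side condition $\clean b\neq 0$ is inherited verbatim, and it is precisely this strictness requirement — the one needed for Lemma~\ref{lem:k:teqtyp} to hold, via Lemma~\ref{lem:k:tinj} — that motivated patching \rul{sl}, \rul{sr} into their strict variants in the first place.
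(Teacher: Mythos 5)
Your proof is correct and follows essentially the same route as the paper's: induction on the untyped strict derivation, using Lemma~\ref{lem:k:teqtyp} to transport types across premises, with the crucial observation for \rul{sl$^+$}/\rul{sr$^+$} that inverting the typing of $\tstar a\cdot b+b$ forces $\typ a n n$ and hence $\typ{a\cdot b+b} n m$, so the induction hypothesis and the typed rule apply. Your treatment of the \rul d congruence (transporting the type of $a$ to $a'$ directly via Lemma~\ref{lem:k:teqtyp} instead of deriving $p=q$ from the injectivity lemma) is a harmless minor variant of the argument detailed for Theorem~\ref{thm:m:untype}.
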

\begin{proof}
  Like for Theorem~\ref{thm:m:untype} and Prop.~\ref{prp:s:suntype},
  we proceed by induction on the untyped derivation to add type
  annotations. We detail the case of rule \rul{sl$^+$}: suppose
  that $\spleq {\tstar a\cdot{}b} b$ was obtained using the untyped
  version of rule \rul{sl$^+$}, and $\typ {\tstar a\cdot{}b+b,b} n
  m$. Necessarily, $\typ a n n$ and $\typ {a\cdot{}b+b} n m$, so that we
  have $\tleq {a\cdot{}b} b n m$ by induction. We conclude using the
  typed version of rule \rul{sl$^+$}: $\tleq {\tstar a\cdot{}b} b n
  m$.
\end{proof}

\noindent We finally have to prove that Kleene algebra equality proofs
can be factorised using the strict equality judgement:
\begin{prop}
  \label{prp:k:factor}
  If $\peq a b$, then we have $\speq {\clean a} {\clean b}$.
\end{prop}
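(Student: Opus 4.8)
The plan is to replay the proof of Proposition~\ref{prp:s:factor}, extending it with the star-related rules. As there, I would first establish by induction the strictness invariant $(\dagger)$: whenever $\peq a b$, the term $a$ is strict iff $b$ is strict. The only genuinely new point is that $\tstar a$ is \emph{always} strict: if $\clean a=0$, the extra rewriting rule $\tstar 0\to 1$ gives $\clean{\tstar a}=1\neq 0$, and otherwise $\clean{\tstar a}=\tstar{\clean a}\neq 0$. Hence rules \rul s, \rul{sp}, \rul{sl}, \rul{sr} trivially preserve $(\dagger)$ (in each, both sides reduce either to a strict normal form or simultaneously to $0$).

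I would then proceed by induction on the derivation of $\peq a b$. The semiring cases (\rul d, \rul{dz}, \rul{zd}, \rul{dp}, \rul{pd}, and the monoid and equivalence rules) are handled exactly as in Proposition~\ref{prp:s:factor}, so I would only detail the new rules, in each case splitting on the strictness of the relevant subterms and using that $\clean\cdot$ commutes with each operation on strict arguments. For \rul{pi} and \rul s, when the argument is strict the normal forms are $\clean a+\clean a$, respectively $\tstar{\clean a}$, and the matching rule applies directly; when it is non-strict both sides share the same normal form and reflexivity (Lemma~\ref{lem:m:sanity}$(i)$, which holds here) closes the case. Rule \rul{sp} is similar: in the strict sub-case both sides reduce so that the goal is exactly \rul{sp} applied to $\clean a$, and in the non-strict sub-case both sides collapse to $1$.

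The delicate cases are the star-induction rules \rul{sl} and \rul{sr}, where the patched rules \rul{sl$^+$}, \rul{sr$^+$} carry a strictness side-condition. Consider \rul{sl}, passing from $\spleq{a\cdot b}b$ to $\spleq{\tstar a\cdot b}b$. When both $a$ and $b$ are strict, the induction hypothesis gives precisely the premise $\spleq{\clean a\cdot\clean b}{\clean b}$ of \rul{sl$^+$}, whose side-condition $\clean b\neq 0$ holds because $b$ is strict, and applying it yields the goal. The main obstacle is the sub-case where $a$ is non-strict but $b$ is strict: unlike in the semiring setting, the left factor does \emph{not} vanish, since $\clean{\tstar a}=1$, so $\clean{(\tstar a\cdot b+b)}=1\cdot\clean b+\clean b$ rather than $\clean b$. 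One must therefore absorb the stray unit, proving $\speq{1\cdot\clean b+\clean b}{\clean b}$ from the monoid law \rul{od}, compatibility \rul p, and idempotence \rul{pi}, all of which survive in the strict fragment. The rule \rul{sr} is symmetric. Checking that the strictness side-conditions line up correctly in this collapsing sub-case, rather than the routine commutation of $\clean\cdot$ with the operations, is where the care is needed.
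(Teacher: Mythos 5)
Your proposal follows essentially the same route as the paper's proof: induction on the derivation with the strictness invariant $(\dagger)$, the observation that $\tstar a$ is always strict thanks to the added rule $\tstar 0\to 1$, and the same case split for \rul{sl}/\rul{sr}, including the absorption of the stray unit $1\cdot\clean b$ via \rul{od} and \rul{pi} when $a$ is non-strict but $b$ is strict. The only sub-case you leave implicit, $\clean b=0$, is the trivial one where both sides normalise to $0$ and rule \rul z closes the goal.
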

\begin{proof}
  By induction on the derivation, like for Prop.~\ref{prp:s:factor}.
  We detail only the rules involving Kleene star:
  \begin{iteMize}{$\bullet$}
  \item \rul{sp}: if $\clean a=0$ then $\clean{(1+a\cdot{}\tstar
      a)}=\clean{(\tstar a)}=1$ so that we can apply \rul o;
    otherwise, $\clean{(1+a\cdot{}\tstar a)}=1+\clean a\cdot{}\tstar
    {\clean a}$ and $\clean{(\tstar a)}=\tstar {\clean a}$: we can
    apply \rul{sp}.
  \item \rul{sl}: suppose that $\pleq{\tstar a\cdot{}b} b$ was
    obtained using this rule, we have to show that $\spleq{\clean
      {(\tstar a\cdot{}b)}} {\clean b}$. If $\clean b=0$ then $\clean
    {(\tstar a\cdot{}b)}=0$ and we use rule \rul z. Otherwise $b$ is
    strict, and either $\clean a=0$, in which case $\clean{(\tstar
      a\cdot{}b)}=1\cdot\clean b$, and we can use rules \rul{od} and
    \rul{pi} to get $\spleq{1\cdot\clean b}{\clean b}$; or $a$ is also
    strict. In the latter case, we use the induction hypothesis:
    $\spleq{\clean{(a\cdot{}b)}} {\clean b}$, i.e., $\spleq{\clean
      a\cdot\clean b} {\clean b}$, and we conclude using rule
    \rul{sl$^+$}.
  \item \rul{sr}: symmetric to the previous case. 
  \end{iteMize}
  (Note that we implicitly use the fact that normalisation commutes
  with sum, so that we have $\spleq {\clean a}{\clean b}$ iff $\speq
  {\clean{(a+b)}} {\clean b}$.)
\end{proof}

\begin{thm}
  \label{thm:k:untype}
  In Kleene algebras, for all $a,b,n,m$ such that $\typ {a,b} n m$, we
  have $\peq a b$ iff $\teq a b n m$.
\end{thm}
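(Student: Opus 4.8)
The plan is to follow exactly the same assembly as in the semiring case (Theorem~\ref{thm:s:untype}), now feeding in the Kleene-specific versions of the supporting results. The reverse implication, that $\teq a b n m$ entails $\peq a b$, is immediate and I shall not dwell on it: every typed equality derivation descends to an untyped one by erasing all type annotations, i.e. by collapsing the set of objects to the singleton so that each typed rule becomes its untyped instance. So I concentrate on the forward direction.

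For the forward implication, suppose $\peq a b$ and $\typ {a,b} n m$. The first move is to pass to normal forms. By Lemma~\ref{lem:k:eqclean} (whose rewriting system now incorporates the rule $\tstar 0 \to 1$ so that the annihilator is eliminated even under a star), I get $\teq a {\clean a} n m$ and $\teq b {\clean b} n m$, together with $\typ {\clean a} n m$ and $\typ {\clean b} n m$. Hence, using the symmetry and transitivity rules \rul s and \rul t, it suffices to establish $\teq {\clean a} {\clean b} n m$.

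Next I would reduce typed equality to typed \emph{strict} equality. Since the strict judgement $\steq[\_]\_\_\_\_$ is obtained by restricting the rules of the full judgement, any derivation of $\steq {\clean a} {\clean b} n m$ is in particular a derivation of $\teq {\clean a} {\clean b} n m$; so it is enough to prove the strict statement. To obtain it I chain the two key propositions: Prop.~\ref{prp:k:factor} converts the hypothesis $\peq a b$ into the untyped strict equality $\speq {\clean a} {\clean b}$ between the normal forms, and Prop.~\ref{prp:k:suntype} then restores the type annotations, yielding $\steq {\clean a} {\clean b} n m$. The typing hypothesis required by Prop.~\ref{prp:k:suntype} is precisely the type-preservation clause of Lemma~\ref{lem:k:eqclean} recorded above, namely $\typ {\clean a} n m$ and $\typ {\clean b} n m$. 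Assembling these through \rul t closes the argument.

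For this final statement I expect no genuine obstacle to remain: all the difficulty has already been absorbed into the supporting results, in particular into the factorisation Prop.~\ref{prp:k:factor}, whose proof must carefully orient the annihilation laws (including $\tstar 0 \to 1$) and route the \rul{sl}/\rul{sr} cases through their strict variants \rul{sl$^+$}/\rul{sr$^+$}. The theorem itself is pure bookkeeping, identical in shape to Theorem~\ref{thm:s:untype}, and the only point to check is that the reductions above are sound — that strictness implies equality, and that normalisation commutes with the sum so that the $\spleq{}{}$ abbreviation used inside Prop.~\ref{prp:k:factor} behaves as expected.
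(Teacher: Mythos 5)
Your proposal is correct and matches the paper's (implicit) proof exactly: the paper states Theorem~\ref{thm:k:untype} without a separate proof because it is assembled exactly as Theorem~\ref{thm:s:untype}, i.e.\ normalise via Lemma~\ref{lem:k:eqclean}, then chain Prop.~\ref{prp:k:factor} with Prop.~\ref{prp:k:suntype} and observe that strict derivations are in particular ordinary derivations. Nothing to add.
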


\subsection{Non-commutative rings}~\medskip
\label{ss:ring}

Before moving to residuated structures, we briefly discuss the case of
non-commutative rings. Indeed, although rings are quite similar to
semirings, they cannot be handled in the same way.

\begin{defi}
  \label{def:ring}
  We define \emph{typed rings} by the signature
  $\{\cdot{}_2,+_2,-_1,1_0,$ $0_0\}$, together with the following
  rules, in addition that from Defs.~\ref{def:monoid}
  and~\ref{def:semiring}, and~§\ref{sec:defs}.
  \begin{mathpar}
    \inferrule*[Right=Ti]{\typ a n m}{\typ {-a} n m} \and
    \inferrule*[Right=i]{\teq a b n m}{\teq {-a} {-b} n m} \and
    \inferrule*[Right=pi]{\typ a n m}{\teq {a+(-a)} 0 n m} 
  \end{mathpar}
\end{defi}

Due to the axiom \rul{pi}, we cannot define a simple function to
remove annihilators and obtain a factorisation system. Indeed, we have
$\peq a b$ iff $\peq {a+(-b)} 0$, so that strictness amounts to
provability; we no longer have a simple syntactical criterion.
However, unlike terms of Kleene algebras, terms of non-commutative
rings can easily be put in normal form (by expanding the underlying
polynomials and ordering monomials lexicographically---assuming that the set
of variables is ordered). This allows us to obtain the untyping
theorem by reasoning about the normalisation function.

Let $\nf a$ denote the normal form of the term $a$ (we do not define
formally this standard function here since we are mainly interested in
the methodology).

\begin{prop}
  For all $a,b,n,m$, we have
  \begin{enumerate}[\em(i)]
  \item $\peq a b$ iff $\nf a = \nf b$;
  \item if $\typ a n m$, then $\typ {\nf a} n m$;
  \item if $\typ a n m$, then $\teq {\nf a} a n m$.
  \end{enumerate}
\end{prop}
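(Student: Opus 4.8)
The plan is to read the three statements as, respectively, soundness-plus-canonicity, subject reduction, and typed soundness of the normal-form function $\nf{\cdot}$, and to argue about $\nf{\cdot}$ directly rather than about derivations — since, as noted, the annihilator-factorisation strategy of §\ref{sec:semiring} is unavailable here. For $(i)$ I would first prove the forward implication by induction on a derivation of $\peq a b$: the congruence rules hold because $\nf{\cdot}$ is a congruence (i.e. $\nf{(a\star b)}$ depends only on $\nf a$ and $\nf b$), and each axiom (the monoid laws, associativity/commutativity/unit for $+$, distributivity, $a+(-a)=0$, $a\cdot 0=0=0\cdot a$) maps to an identity that already holds between normal forms, because $\nf a$ is by construction the canonical monomial-ordered representative of $a$ in the free ring generated by $\mathcal X$ (non-commutative polynomials with integer coefficients). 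For the converse I would establish soundness of the normaliser, $\peq a {\nf a}$, by checking that every single rewrite — distributing a product over a sum, commuting or associating summands, collecting or cancelling equal monomials — is an instance of an untyped axiom; transitivity then gives $\peq a {\nf a}$, and if $\nf a=\nf b$ syntactically we conclude $\peq a b$ by composing $\peq a {\nf a}$ with $\peq{\nf b}b$.

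For $(ii)$ the key is a subject-reduction property: each normalisation rewrite sends a term of type $\h n m$ to a term of type $\h n m$. Distribution preserves types by \rul{dp} and \rul{pd}; reordering summands preserves types because all summands of a well-typed sum share its type; and a cancellation $u+(-u)\to 0$ replaces a subterm of type $\h n m$ by $0$, which is typeable at every type by \rul{Tz}. Consequently every monomial surviving in $\nf a$ is typeable at $\h n m$, and $\typ{\nf a}{n}{m}$ follows. Here one should note the counterpart of Lemmas~\ref{lem:m:tinj} and~\ref{lem:s:tinj}: once $\Gamma$ is fixed, a non-unit monomial has at most one type, so collecting or cancelling two syntactically equal monomials never merges incompatibly-typed terms — the two copies of $u$ necessarily carry the same type $\h n m$.

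For $(iii)$ I would combine $(ii)$ with a typed replay of the very normalisation used in $(i)$. Each rewrite both preserves the type $\h n m$ (by the subject-reduction argument) and is licensed by the corresponding \emph{typed} axiom at that type: distributivity \rul{dp}/\rul{pd}, commutativity \rul{pc}, associativity \rul{pa}, cancellation \rul{pi}, annihilation \rul{dz}/\rul{zd}, and the unit laws. Hence every intermediate term of the rewrite sequence stays typeable at $\h n m$ and successive terms are related by a typed equality; chaining these with transitivity, starting from reflexivity at $\h n m$ (Lemma~\ref{lem:m:sanity}$(i)$, which persists in this setting), yields $\teq a {\nf a} n m$, whence $\teq{\nf a} a n m$ by symmetry.

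The main obstacle I anticipate is exactly the control of intermediate terms needed for $(iii)$: one must guarantee that the whole normalising rewrite sequence can be performed without ever leaving the type $\h n m$, so that each step is faithfully replayable by a typed axiom. This is precisely where the cancellation law $a+(-a)=0$ could a priori introduce the type mismatch that forced the ``strict'' detour for semirings and Kleene algebras; it is tamed here by the observation above, namely that cancellation only ever removes a monomial against its own negation, both of which already carry the ambient type $\h n m$.
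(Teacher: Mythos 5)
Your proposal is correct and follows essentially the same route as the paper, which merely sketches the argument: $(i)$ is the standard soundness/completeness of the untyped normaliser, $(ii)$ is type preservation of normalisation, and $(iii)$ is a typed replay of the correctness proof. The only cosmetic difference is that you organise $(ii)$ and $(iii)$ as subject reduction along the rewrite sequence where the paper phrases them as inductions on the typing derivation; your observation that cancellation $u+(-u)\to 0$ is harmless because both summands already carry the ambient type (and $0$ is typeable everywhere by \rul{Tz}) is exactly the point that makes the typed replay go through where the factorisation strategy of the semiring case would not.
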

\begin{proof}\hfill
  \begin{enumerate}[(i)]
  \item Standard: this is the correctness and completeness of the
    untyped decision procedure: two expressions are equal if and only
    if they share the same normal form.
  \item By a straightforward induction on the typing derivation.
  \item Also by induction on the typing derivation, it amounts to
    replaying the standard correctness proof and checking that it is
    actually well-typed. \qedhere
  \end{enumerate}
\end{proof}

\noindent 
The untyping theorem follows immediately:
\begin{cor}
  In non-commutative rings, for all $a,b,n,m$ such that $\typ {a,b} n
  m$, we have $\peq a b$ iff $\teq a b n m$.
\end{cor}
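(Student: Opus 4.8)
The plan is to reduce the statement entirely to the three properties of the normalisation function $\nf{}$ established in the preceding proposition, reasoning through the shared normal form rather than by induction on derivations (as was needed for semirings and Kleene algebras). First I would dispatch the reverse implication: assuming $\teq a b n m$, erasing all type annotations from the typed derivation yields an untyped derivation of $\peq a b$, exactly as for monoids and semirings. This direction is routine and does not use the proposition at all.

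For the direct implication, assume $\peq a b$ and $\typ {a,b} n m$. By property~(i), $\peq a b$ gives $\nf a = \nf b$, so the two normal forms are \emph{literally} the same term. By property~(iii), applied separately to $a$ and to $b$ (both of which have type $\h n m$ by hypothesis), I obtain $\teq {\nf a} a n m$ and $\teq {\nf b} b n m$. Applying the symmetry rule \rul s to the first of these yields $\teq a {\nf a} n m$. Since $\nf a$ and $\nf b$ denote the same term, the judgements $\teq a {\nf a} n m$ and $\teq {\nf b} b n m$ chain together through the transitivity rule \rul t to give $\teq a b n m$, as required.

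There is no genuine obstacle remaining at this level: all the difficulty has been front-loaded into the proposition, and in particular into property~(iii), whose proof amounts to replaying the standard correctness argument for the normalisation function while checking that every rewriting step it performs preserves well-typedness at the ambient type. The point that makes this argument clean—in contrast to the semiring and Kleene cases—is that ring normalisation never relates terms of differing types: property~(ii) guarantees that $\nf a$ stays at type $\h n m$, so no strictness restriction nor any factorisation through annihilator-removal is needed, and the untyping theorem falls out by a single transitivity step.
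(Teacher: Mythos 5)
Your proposal is correct and follows essentially the same route as the paper: reduce to the shared normal form via property~(i), then transfer back to $a$ and $b$ at type $\h n m$ using property~(iii) together with symmetry and transitivity. The only cosmetic difference is that the paper inserts the intermediate judgement $\teq {\nf a} {\nf b} n m$ by reflexivity (justified by property~(ii)), whereas you chain the two instances of~(iii) directly through the syntactic identity $\nf a = \nf b$; both are fine.
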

\begin{proof}
  If $\peq a b$ then $\nf a = \nf b$ by the point $(i)$ above, which
  entails $\teq {\nf a} {\nf b} n m$ by reflexivity since $\typ {\nf
    a} n m$ by $(ii)$, from which we deduce $\teq a b n m$ by
  $(iii)$. The converse implication is straightforward, as in the
  previous sections.
\end{proof}

\section{Residuated lattices}
\label{sec:rl}

\noindent
We now move to our second example, \emph{residuated lattices}. These
structures also admit binary relations as models; they are of special
interest to reason algebraically about well-founded relations. For
example, residuation is used to prove Newman's Lemma in relation
algebras~\cite{DBW97}.  We start with a simpler structure.

\medskip

A \emph{residuated monoid} is a tuple
$(X,\leq,\cdot{},1,\backslash,/)$, such that $(X,\leq)$ is a partial
order, $(X,\cdot{},1)$ is a monoid whose product is monotonic ($a\leq
a'$ and $b\leq b'$ entail $a\cdot{}b \leq a'\cdot{}b'$), and
$\backslash,/$ are binary operations, respectively called \emph{left}
and \emph{right divisions}, characterised by the following
equivalences: 
\begin{align*} a\cdot{}b \leq c
  \quad\Leftrightarrow\quad b \leq a\backslash c
  \quad\Leftrightarrow\quad a \leq c/b
\end{align*}

\noindent 
Such a structure can be typed in a natural way, by using the following
rules for left and right divisions:
\begin{mathpar}
  \inferrule*[Right=Tl]{\typ c n m \and \typ a n p}%
  {\typ {a\backslash c} p m} \and
  \inferrule*[Right=Tr]{\typ c n m \and \typ b p m}%
  {\typ {c/b} n p}
\end{mathpar}

\noindent
Although we can easily define a set of axioms to capture equalities
provable in residuated monoids~\cite{Jipsen:Survey}, the transitivity
rule $\rul T$ becomes problematic in this setting (there is no
counterpart to Lemma~\ref{lem:m:teqtyp}). Instead, we exploit a
characterisation due to Ono and Komori~\cite{OnoK85}, based on a
Gentzen proof system for the full Lambek
calculus~\cite{Lambek58}. Indeed, the ``cut'' rule corresponding to
this system, which plays the role of the transitivity rule, can be
eliminated. Therefore, this characterisation allows us to avoid the
problems we encountered with standard equational proof systems. In
some sense, moving to cut-free proofs corresponds to using a
factorisation system, like we did in the previous section
(Prop.~\ref{prp:s:factor}).

\subsection{Gentzen proof system for residuated monoids}~\medskip
\label{ss:gpsrm}

Let $l,k,h$ range over lists of terms, let $l;k$ denote the
concatenation of $l$ and $k$, and let $\epsilon$ be the empty
list. The Gentzen proof system is presented on Fig.~\ref{fig:uimll};
it relates lists of terms to terms. It is quite
standard~\cite{Jipsen:Survey}: there is an axiom rule~\rul V, and, for
each operator, an introduction and an elimination rule.
\begin{figure}[tb]
  \centering 
  \begin{mathpar}
    \inferrule*[Right=v ]{ }{\pseq x x} \and
    \inferrule*[Right=Io]{ }{\pseq \epsilon 1} \and
    \inferrule*[Right=Id]{\pseq l a \and \pseq {l'}{a'}}{\pseq {l;l'} {a\cdot{}a'}} \and
    \inferrule*[Right=Ir]{\pseq {l;b} a}{\pseq l {a/b}} \and
    \inferrule*[Right=Il]{\pseq {b;l} a}{\pseq l {b\backslash a}} \\
    \inferrule*[Right=Eo]{\pseq {l;l'} a}{\pseq {l;1;l'} a} \and
    \inferrule*[Right=Ed]{\pseq {l;b;c;l'} a}{\pseq {l;b\cdot{}c;l'} a} \and
    \inferrule*[Right=Er]{\pseq k b \and \pseq {l;c;l'} a}{\pseq {l;c/b;k;l'} a} \and
    \inferrule*[Right=El]{\pseq k b \and \pseq {l;c;l'} a}{\pseq {l;k;b\backslash c;l'} a}
  \end{mathpar}
  \caption{Gentzen proof system for residuated monoids.}
  \label{fig:uimll}
\end{figure}
The axiom rule can be generalised to terms~$(i)$, the cut rule is
admissible~$(ii)$, and the proof system is correct and complete
w.r.t.\ residuated monoids~$(iii)$.
\begin{prop}
 \begin{enumerate}[(i)]
  \item For all $a$, we have $\pseq a a$.
  \item For all $l,k,k',a,b$ such that $\pseq l a$ and
    $\pseq {k;a;k'} b$, we have $\pseq {k;l;k'} b$.
  \item For all $a,b$, we have $\pseq a b$ iff $a\leq b$ holds in all
    residuated monoids.
  \end{enumerate}
\end{prop}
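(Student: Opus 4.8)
The plan is to prove the three statements in the natural order, since each feeds into the next. For part $(i)$, I would proceed by induction on the structure of the term $a$. The base cases are a variable $x$ (handled by rule \rul v) and the unit $1$: for the latter, rule \rul{Io} gives $\pseq \epsilon 1$, and rule \rul{Eo} with $l=l'=\epsilon$ then yields $\pseq 1 1$. For a product $a\cdot a'$, the induction hypotheses give $\pseq a a$ and $\pseq {a'} {a'}$; applying \rul{Id} gives $\pseq {a;a'} {a\cdot a'}$, and then \rul{Ed} with $l=l'=\epsilon$ collapses the list $a;a'$ into $a\cdot a'$ to conclude $\pseq {a\cdot a'} {a\cdot a'}$. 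For a left division $b\backslash a$, the induction hypotheses give $\pseq b b$ and $\pseq a a$; rule \rul{El} with suitable empty lists combines these to move $b\backslash a$ into the antecedent, and rule \rul{Il} reintroduces it on the right, giving $\pseq {b\backslash a} {b\backslash a}$. The case of right division $a/b$ is symmetric, using \rul{Er} and \rul{Ir}.

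For part $(ii)$, the admissibility of cut, I would argue by a double (lexicographic) induction, the primary measure being the structure (size) of the cut formula $a$ and the secondary measure being the sizes of the two given derivations. The proof proceeds by case analysis on the last rules applied in the derivations of $\pseq l a$ and $\pseq {k;a;k'} b$. The routine cases are those where the cut formula $a$ is not principal in one of the two last rules: there one permutes the cut upward past that rule, reducing the size of a derivation while keeping the cut formula fixed, and applies the secondary induction hypothesis. The essential cases are those where $a$ is principal on both sides simultaneously — for instance $a = a_1\cdot a_2$ introduced by \rul{Id} on the left and eliminated by \rul{Ed} on the right, or $a = c/b$ introduced by \rul{Ir} on the left and eliminated by \rul{Er} on the right. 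In each such case the single cut on $a$ is replaced by one or two cuts on the strictly smaller subformulas ($a_1,a_2$, respectively $b,c$), which are justified by the primary induction hypothesis.

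For part $(iii)$, I would establish the two directions separately. Soundness (left to right) is a straightforward induction on the derivation of $\pseq l a$: interpreting a list $l = b_1;\dots;b_k$ as the product $b_1\cdots b_k$ (and $\epsilon$ as $1$), one checks that each inference rule corresponds to a valid inequality in every residuated monoid, using the defining adjunctions for $\backslash$ and $/$ together with monotonicity of the product. Completeness (right to left) is obtained by the standard syntactic (Lindenbaum–Tarski) construction: one builds the free residuated monoid on the set of variables, taking as carrier the set of terms quotiented by the relation $a\sim b$ defined as ($\pseq a b$ and $\pseq b a$), with order $a\leq b \iff \pseq a b$. Here part $(i)$ supplies reflexivity and part $(ii)$ supplies transitivity of this order; the adjunction properties defining $\backslash$ and $/$ are read off directly from the introduction and elimination rules \rul{Il},\rul{El},\rul{Ir},\rul{Er}. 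Since $a\leq b$ holds in this particular model exactly when $\pseq a b$, the universally valid inequalities are precisely the derivable ones.

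The main obstacle is part $(ii)$: making the cut-elimination induction go through requires a carefully chosen well-founded measure so that every case — especially the commutative cases where the cut must be permuted past a two-premise rule such as \rul{Er} or \rul{El} (which may duplicate or split the context) — strictly decreases it. Getting the interaction between the primary measure (cut-formula size) and the secondary measure (derivation size) right, and correctly tracking how the lists $k,k'$ are rearranged when the cut is pushed upward, is where the real care is needed; the remaining parts are essentially bookkeeping.
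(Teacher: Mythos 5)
Your proposal is correct and matches the paper's (implicit) approach: the paper merely notes that point $(i)$ is easy and defers cut admissibility and completeness to the cited references (Ono--Komori, Okada--Terui, Jipsen--Tsinakis), and the arguments there are exactly the ones you sketch --- structural induction on the term for the generalised axiom, Gentzen-style cut elimination by a double induction on cut-formula size and derivation size (which goes through here precisely because the calculus has no contraction), and soundness plus a Lindenbaum-style free-algebra construction for completeness. Your closing remark about choosing the secondary measure so that permuting a cut past the two-premise rules \rul{Er} and \rul{El} strictly decreases it is exactly the right point of care, and the standard choice (sum of the heights of the two subderivations) suffices.
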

\begin{proof}
  Point $(i)$ is easy; see~\cite{OnoK85,OkadaT99,Jipsen:Survey} for cut
  admissibility and completeness.
\end{proof}

\noindent
Type decorations can be added to the proof system in a straightforward
way (see Fig.~\ref{fig:imll}). However, using this proof system, we
were able to prove the untyping theorem only for the unit-free
fragment: we needed to assume that terms have at most one type, which
is not true in the presence of $1$. This proof was rather involved, so
that we did not manage to circumvent this difficulty in a nice and
direct way. Instead, as hinted in the introduction, we move to the
following more symmetrical setting.

\begin{figure}[tb]
  \centering 
  \begin{mathpar}
    \inferrule*[Right=v ]{\Gamma(x)=(n,m)}{\seq x x n m} \and
    \inferrule*[Right=Io]{ }{\seq \epsilon 1 n n} \and
    \inferrule*[Right=Eo]{\seq {l;l'} a n m}{\seq {l;1;l'} a n m} \and
    \inferrule*[Right=Id]{\seq l a n m \and \seq {l'}{a'} m p}%
                         {\seq {l;l'} {a\cdot{}a'} n p} \and
    \inferrule*[Right=Ed]{\seq {l;b;c;l'} a n m}%
                         {\seq {l;b\cdot{}c;l'} a n m} \and
    \inferrule*[Right=Ir]{\typ b p m \and \seq {l;b} a n m}%
                         {\seq l {a/b} n p} \and
    \inferrule*[Right=Er]{\typ {l'} m q \and \seq k b n m \and \seq {l;c;l'} a p q}%
                         {\seq {l;c/b;k;l'} a p q} \and
    \inferrule*[Right=Il]{\typ b n p \and \seq {b;l} a n m}%
                         {\seq l {b\backslash a} p m} \and
    \inferrule*[Right=El]{\typ l p m \and \seq k b m n \and \seq {l;c;l'} a p q}%
                         {\seq {l;k;b\backslash c;l'} a p q} 
  \end{mathpar}
  \caption{Typed Gentzen proof system for residuated monoids.}
  \label{fig:imll}
\end{figure}

\subsection{Cyclic MLL}~\medskip
\label{ss:cymll} 

The sequent proof system for residuated monoids (Fig.~\ref{fig:uimll})
actually corresponds to a non-commutative version of intuitionistic
multiplicative linear logic (IMLL)~\cite{Girard87:ll}: the product
$(\cdot{})$ is a non-commutative tensor~$(\otimes)$, and left and
right divisions $(\backslash,/)$ are the corresponding left and right
linear implications ($\multimap,\multimapinv$).  Moreover, it happens
that this system is just the intuitionistic fragment of cyclic
multiplicative linear logic (MLL)~\cite{Yetter90}.  The untyping
theorem turned out to be easier to prove in this setting, which we
describe below.

We assume a copy $\mathcal X^\bot$ of the set of variables $(\mathcal
X)$, and we denote by $x^\bot$ the corresponding elements which we
call \emph{dual variables}. From now on, we shall consider terms with
both kinds of variables: $T(\Sigma+X+X^\bot)$.  We keep an algebraic
terminology to remain consistent with the previous sections; notice
that using terminology from logic, a term is a formula and a variable
is an atomic formula.

\begin{defi}
  \label{def:MLL:terms}
  \emph{Typed MLL terms} are defined by the signature
  $\{\otimes_2,\parr_2,1_0,\bot_0\}$, together with the following
  typing rules:
  \begin{mathpar}
    \inferrule*[Right=Tv]{\Gamma(x)=(n,m)}{\typ x n m} \and
    \inferrule*[Right=T$_1$]{ }{\typ 1 n n} \and
    \inferrule*[Right=T$_\otimes$]{\typ a n m\and\typ b m p}{\typ{a\otimes b} n p} \\
    \inferrule*[Right=Tv$^\bot$]{\Gamma(x)=(n,m)}{\typ {x^\bot} m n}\and
    \inferrule*[Right=T$_\bot$]{ }{\typ \bot n n} \and
    \inferrule*[Right=T$_\parr$]{\typ a n m \and \typ b m p}{\typ {a\parr b} n p} 
  \end{mathpar}
\end{defi}
\noindent
Tensor $(\otimes)$ and par $(\parr)$ are typed like the previous dot
operation; bottom $(\bot)$ is typed like the unit $(1)$; dual
variables are typed by mirroring the types of the corresponding
variables. We extend type judgements to lists of terms as follows:
\begin{mathpar}
  \inferrule*[Right=Te]{ }{\typ \epsilon n n} \and
  \inferrule*[Right=Tc]{\typ a n m \and \typ l m p}{\typ {a;l} n p}
\end{mathpar} 
(be careful not to confuse $\typ {a,b} n m$, which indicates that both
$a$ and $b$ have type $\h n m$, with $\typ{a;b} n m$, which indicates
that the list $a;b$ has type $\h n m$).
\emph{Linear negation} is defined over terms and lists of terms as follows:
\begin{align*}
  (x)^\bot & \eqdef x^\bot &
  1^\bot & \eqdef \bot & 
  (a\otimes b)^\bot & \eqdef b^\bot\parr a^\bot & 
  (a;l)^\bot & \eqdef l^\bot;a^\bot \\
  (x^\bot)^\bot & \eqdef x&
 \bot^\bot & \eqdef 1 & 
  (a\parr b)^\bot & \eqdef b^\bot\otimes a^\bot & 
 \epsilon^\bot & \eqdef \epsilon
\end{align*} 
Note that since we are in a non-commutative setting, negation has to
reverse the arguments of tensors and pars, as well as lists. Negation
is involutive and mirrors type judgements:
\begin{lem}
  \label{lem:lneg:facts}
  For all $l$, $l^{\bot\bot}=l$; for all $l,n,m$, $\typ l n m$ iff
  $\typ {l^\bot} m n$.
\end{lem}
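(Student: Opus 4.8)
The statement to prove is Lemma~\ref{lem:lneg:facts}: for all lists of terms $l$, we have $l^{\bot\bot}=l$, and for all $l,n,m$, $\typ l n m$ iff $\typ {l^\bot} m n$.

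\medskip\noindent\textbf{Proposed proof.}
The plan is to prove both assertions by structural induction, first on terms and then lifting to lists, since the definition of negation on lists is given in terms of negation on their constituent terms. For the involutivity claim $l^{\bot\bot}=l$, I would first establish $a^{\bot\bot}=a$ for every \emph{term} $a$ by induction on the structure of $a$. The atomic cases are the crux of the symmetry: $(x)^{\bot\bot}=(x^\bot)^\bot=x$ and dually $(x^\bot)^{\bot\bot}=(x)^\bot{}^\bot=x^\bot$, and similarly $1^{\bot\bot}=\bot^\bot=1$ and $\bot^{\bot\bot}=1^\bot=\bot$. For the binary connectives, the key observation is that negation swaps both the connective ($\otimes\leftrightarrow\parr$) \emph{and} the order of the arguments, so applying it twice restores both; concretely $(a\otimes b)^{\bot\bot}=(b^\bot\parr a^\bot)^\bot=a^{\bot\bot}\otimes b^{\bot\bot}$, which equals $a\otimes b$ by the induction hypotheses, and the $\parr$ case is symmetric. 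Once this holds for terms, I would lift it to lists by induction on $l$: the empty case is $\epsilon^{\bot\bot}=\epsilon^\bot=\epsilon$, and for $a;l$ I compute $(a;l)^{\bot\bot}=(l^\bot;a^\bot)^\bot=a^{\bot\bot};l^{\bot\bot}$, again using that negation reverses concatenation twice, then conclude with the term-level result and the list induction hypothesis.

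For the second assertion, I would similarly prove the term-level statement ``$\typ a n m$ iff $\typ{a^\bot} m n$'' first, by induction on the typing derivation (or equivalently on the structure of $a$, inverting the unique typing rule that applies to each head symbol). The atomic and unit cases follow directly by matching the typing rules: rule \rul{Tv} for $x$ at type $\h n m$ corresponds exactly to rule \rul{Tv$^\bot$} for $x^\bot$ at type $\h m n$, and \rul{T$_1$} versus \rul{T$_\bot$} both produce square types $\h n n$ that are trivially symmetric. For the product, if $\typ{a\otimes b} n p$ then by inversion $\typ a n m$ and $\typ b m p$ for some $m$; the induction hypotheses give $\typ{a^\bot} m n$ and $\typ{b^\bot} p m$, which compose by rule \rul{T$_\parr$} to yield $\typ{b^\bot\parr a^\bot} p n$, i.e.\ $\typ{(a\otimes b)^\bot} p n$, as required; the reverse direction uses involutivity together with the forward direction, and the $\parr$ case is symmetric. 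Lifting to lists proceeds by induction using rules \rul{Te} and \rul{Tc}: for $a;l$ with $\typ{a;l} n p$, inversion gives $\typ a n m$ and $\typ l m p$, the induction hypotheses give $\typ{a^\bot} m n$ and $\typ{l^\bot} p m$, and rule \rul{Tc} assembles $\typ{l^\bot;a^\bot} p n=\typ{(a;l)^\bot} p n$.

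The only point requiring care is the bookkeeping of the argument reversal: because the intermediate object $m$ that witnesses the composition sits ``in the middle'' of both the product and the concatenation, one must check that negation places the two negated factors in the correct swapped order so that their types still meet at $m$. This is exactly why the definition of negation reverses arguments, and verifying that the middle objects line up is the main thing to get right; everything else is routine matching of typing rules. I expect no genuine obstacle here, since the definitions of negation were designed precisely to make this symmetry hold.
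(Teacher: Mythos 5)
Your proof is correct; the paper states this lemma without proof (the formal details being relegated to the accompanying Coq development), and the routine structural induction you give — first on terms, then lifted to lists, with the reverse direction of the typing equivalence obtained from involutivity plus the forward direction — is exactly the intended argument. The only detail you gloss over is that, since negation and the typing rule \rul{Tc} are defined on cons cells while $(a;l)^\bot=l^\bot;a^\bot$ produces a snoc, you implicitly need the small auxiliary facts that negation anti-commutes with concatenation and that list typing composes over concatenation, both of which are immediate further inductions.
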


\noindent 
If we were using a two-sided presentation of MLL, judgements would be
of the form $\seq l k m n$, intuitively meaning ``$\pseq l k$ is
derivable in cyclic MLL, and lists $l$ and $k$ have type $\h m
n$''. Instead, we work with one-sided sequents to benefit from the
symmetrical nature of MLL. At the untyped level, this means that we
replace $\pseq l k$ with $\pceq {l^\bot;k}$. According to the previous
intuitions, the list $l^\bot;k$ has a square type $\h n n$: the object
$m$ is hidden in the concatenation, so that it suffices to record the
outer object~$(n)$. Judgements finally take the form $\ceq n l$,
meaning ``the one-sided MLL sequent $\vdash l$ is derivable at type
$\h n n$''.

\begin{defi}
  \label{def:MLL}
  \emph{Typed cyclic MLL} is defined by the sequent calculus from
  Fig.~\ref{fig:typed-cMLL}.
\end{defi}
\begin{figure}[tb]
  \centering 
  \begin{mathpar}
    \inferrule*[Right=1]{ }{\ceq n 1} \and %
    \inferrule*[Right=$\bot$]{\ceq n l}{\ceq n {\bot;l}} \and %
    \inferrule*[Right=$\otimes$]{\ceq n {l;a}\and \ceq n {b;k}}{\ceq n {l;a\otimes b;k}} \and %
    \inferrule*[Right=$\parr$]{\ceq n {a;b;l}}{\ceq n {a\parr b;l}} \\ %
    \inferrule*[Right=A]{\Gamma(x)=(n,m)}{\ceq m {x^\bot;x}} \and %
    \inferrule*[Right=E]{\typ a n m \and \ceq m {l;a}}{\ceq n {a;l}}
  \end{mathpar}
  \caption{Typed proof system for Cyclic MLL.}
  \label{fig:typed-cMLL}
\end{figure}

\noindent
Except for type decorations, the system is standard: the five first
rules are the logical rules of MLL~\cite{Girard87:ll}. Rule \rul E is
the only structural rule, this is a restricted form of the exchange
rule, yielding cyclic permutations: sequents have to be thought of as
rings~\cite{Yetter90}.  As before, we added type decorations in a
minimal way, so as to ensure that derivable sequents have square
types, as explained above:
\begin{lem}
  For all $l,n$, if $\ceq n l$ then $\typ l n n$.
\end{lem}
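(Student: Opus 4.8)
The plan is to induct on the derivation of $\ceq n l$, after recording two routine facts about the typing of lists. First, list typing decomposes along concatenation: by induction on $l$ using rules \rul{Te} and \rul{Tc}, one has $\typ{l;l'} n p$ if and only if there is an object $q$ with $\typ l n q$ and $\typ{l'} q p$. Second, MLL term typing is injective in the sense of Lemma~\ref{lem:m:tinj}: if $\typ a n m$ and $\typ a {n'}{m'}$, then $n=n'$ iff $m=m'$. This is established exactly as in the monoid case, by induction on the term $a$: the base cases $x$ and $x^\bot$ have unique types, the only polymorphic constants $1$ and $\bot$ are always typed squarely (so the biconditional holds trivially for them), and the property is transported through $\otimes$ and $\parr$ by matching their middle objects.

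With these in hand, most cases are pure bookkeeping on intermediate objects. The axiom rule \rul A is immediate from the typing rules for $x$ and $x^\bot$. For the $\bot$ rule we prepend $\bot$, which is square at $n$, to the inductively typed tail; the unit rule is handled by the typing rule for $1$ together with \rul{Te}. For the $\parr$ rule, the induction hypothesis gives $\typ{a;b;l} n n$; decomposing twice yields $\typ a n p$, $\typ b p q$, $\typ l q n$ for some $p,q$, whence $\typ{a\parr b} n q$ and then $\typ{a\parr b;l} n n$. The tensor rule is analogous: from $\typ{l;a} n n$ and $\typ{b;k} n n$ we decompose to $\typ l n p$, $\typ a p n$, $\typ b n q$, $\typ k q n$; the shared middle object $n$ lets us form $\typ{a\otimes b} p q$, and reassembling gives $\typ{l;a\otimes b;k} n n$. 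No appeal to injectivity is needed in any of these cases, since the relevant middle objects line up by construction.

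The only delicate case, and the one I expect to be the main obstacle, is the structural rule \rul E, which performs a cyclic permutation. Its premises are $\typ a n m$ and, by the induction hypothesis, $\typ{l;a} m m$, and the goal is $\typ{a;l} n n$. Decomposing $\typ{l;a} m m$ gives an object $p$ with $\typ l m p$ and $\typ a p m$. We now have two typings of the same term $a$ with identical codomain $m$, namely $\typ a p m$ and the hypothesis $\typ a n m$; by injectivity their domains coincide, so $p=n$. Hence $\typ l m n$, and prepending $a$ with $\typ a n m$ yields $\typ{a;l} n n$. The point is that a cyclic shift of a square list is again square, but a priori at a different object; it is the side-condition $\typ a n m$ carried by rule \rul E that fixes this object to $n$, and reconciling it with the object $p$ produced by the decomposition is exactly what requires injectivity of the typing judgement.
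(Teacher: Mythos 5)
Your proof is correct, and it is the routine induction on the derivation that the paper leaves implicit (the lemma is stated without proof there). In particular you correctly isolate the one non-trivial point: the exchange rule \rul{E} requires reconciling the typing $\typ a n m$ carried by the rule with the typing of $a$ obtained by decomposing the inductively typed list $l;a$, which is exactly where the MLL analogue of the type-injectivity property (Lemma~\ref{lem:m:tinj}) is needed, and that analogue does hold since $1$ and $\bot$ are only typed squarely.
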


We now give a graphical interpretation of the untyping theorem, using
proof nets.  Since provability is preserved by cyclic permutations,
one can draw proof structures by putting the terms of a sequent on a
circle~\cite{Yetter90}. For example, a proof $\pi$ of a sequent
$\pceq{l_0,\dots,l_i}$ will be represented by a proof net whose
interface is given by the left drawing below.

\smallskip
\begin{center}
 \begin{picture}(0,0)%
\includegraphics{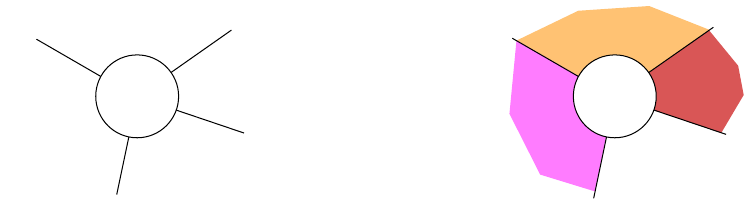}%
\end{picture}%
\setlength{\unitlength}{2565sp}%
\begingroup\makeatletter\ifx\SetFigFont\undefined%
\gdef\SetFigFont#1#2#3#4#5{%
  \reset@font\fontsize{#1}{#2pt}%
  \fontfamily{#3}\fontseries{#4}\fontshape{#5}%
  \selectfont}%
\fi\endgroup%
\begin{picture}(5493,1636)(2401,-3246)
\put(2416,-1936){\makebox(0,0)[lb]{\smash{{\SetFigFont{8}{9.6}{\familydefault}{\mddefault}{\updefault}{\color[rgb]{0,0,0}$l_0$}%
}}}}
\put(4137,-1793){\makebox(0,0)[lb]{\smash{{\SetFigFont{8}{9.6}{\familydefault}{\mddefault}{\updefault}{\color[rgb]{0,0,0}$l_1$}%
}}}}
\put(4226,-2653){\makebox(0,0)[lb]{\smash{{\SetFigFont{8}{9.6}{\familydefault}{\mddefault}{\updefault}{\color[rgb]{0,0,0}$l_2$}%
}}}}
\put(3083,-3168){\makebox(0,0)[lb]{\smash{{\SetFigFont{8}{9.6}{\familydefault}{\mddefault}{\updefault}{\color[rgb]{0,0,0}$l_i$}%
}}}}
\put(3809,-2870){\rotatebox{30.0}{\makebox(0,0)[b]{\smash{{\SetFigFont{8}{9.6}{\familydefault}{\mddefault}{\updefault}{\color[rgb]{0,0,0}$\dots$}%
}}}}}
\put(5982,-1936){\makebox(0,0)[lb]{\smash{{\SetFigFont{8}{9.6}{\familydefault}{\mddefault}{\updefault}{\color[rgb]{0,0,0}$l_0$}%
}}}}
\put(6610,-3168){\makebox(0,0)[lb]{\smash{{\SetFigFont{8}{9.6}{\familydefault}{\mddefault}{\updefault}{\color[rgb]{0,0,0}$l_i$}%
}}}}
\put(7336,-2870){\rotatebox{30.0}{\makebox(0,0)[b]{\smash{{\SetFigFont{8}{9.6}{\familydefault}{\mddefault}{\updefault}{\color[rgb]{0,0,0}$\dots$}%
}}}}}
\put(7796,-2653){\makebox(0,0)[lb]{\smash{{\SetFigFont{8}{9.6}{\familydefault}{\mddefault}{\updefault}{\color[rgb]{0,0,0}$l_2$}%
}}}}
\put(7713,-1793){\makebox(0,0)[lb]{\smash{{\SetFigFont{8}{9.6}{\familydefault}{\mddefault}{\updefault}{\color[rgb]{0,0,0}$l_1$}%
}}}}
\put(7440,-2271){\makebox(0,0)[lb]{\smash{{\SetFigFont{8}{9.6}{\familydefault}{\mddefault}{\updefault}{\color[rgb]{0,0,0}$n_2$}%
}}}}
\put(6321,-2566){\makebox(0,0)[lb]{\smash{{\SetFigFont{8}{9.6}{\familydefault}{\mddefault}{\updefault}{\color[rgb]{0,0,0}$n_0$}%
}}}}
\put(6852,-1916){\makebox(0,0)[lb]{\smash{{\SetFigFont{8}{9.6}{\familydefault}{\mddefault}{\updefault}{\color[rgb]{0,0,0}$n_1$}%
}}}}
\put(3427,-2368){\makebox(0,0)[b]{\smash{{\SetFigFont{8}{9.6}{\familydefault}{\mddefault}{\updefault}{\color[rgb]{0,0,0}$\pi$}%
}}}}
\put(6954,-2368){\makebox(0,0)[b]{\smash{{\SetFigFont{8}{9.6}{\familydefault}{\mddefault}{\updefault}{\color[rgb]{0,0,0}$\pi$}%
}}}}
\end{picture}%

\end{center}

\noindent 
Suppose now that the corresponding list admits a square type: $\typ l
n n$, i.e., $\forall j\leq i, \typ{l_j}{n_j}{n_{j+1}}$, for some
$n_0,\dots,n_{i+1}$ with $n=n_0=n_{i+1}$. One can add these type
decorations as background colours, in the areas delimited by terms, as
we did on the right-hand side.

The logical rules of the proof system (Fig.~\ref{fig:typed-cMLL}) can
then be represented by the proof net constructions from
Fig.~\ref{fig:proofnetsrules} (thanks to this sequent representation,
the exchange rule \rul E is implicit).  Since these constructions
preserve planarity, all proof nets are planar~\cite{BellinF98}, and
the idea of background colours makes sense. Moreover, they can be
coloured in a consistent way, so that typed derivations correspond to
proof nets that can be entirely and consistently coloured.
\begin{figure}[t]
  \centering
  \begin{center}
    \input{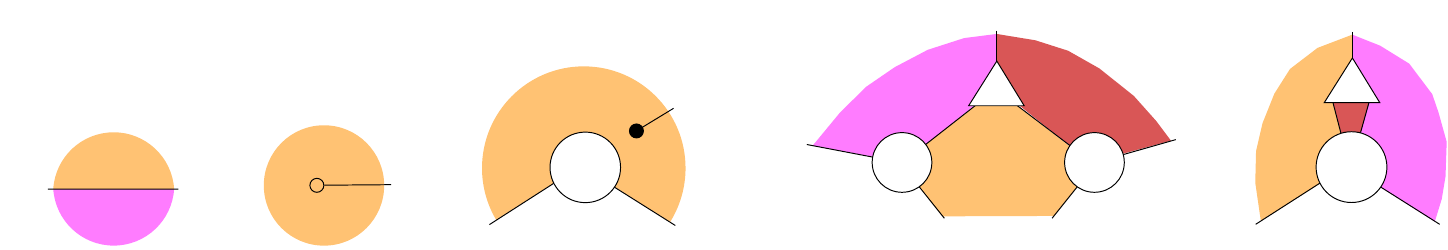_t}
  \end{center}
  \caption{Proof nets for Cyclic MLL.}
  \label{fig:proofnetsrules}
\end{figure}
Therefore, one way to prove the untyping theorem consists in showing
that any proof net whose outer interface can be coloured can be
coloured entirely.
As an example, we give an untyped derivation below, together with the
corresponding proof net. Assuming that $\Gamma(x)=\h n m$ and
$\Gamma(y)=\h m p$, the conclusion has type $\h p p$, and the outer
interface of the proof net can be coloured (here, with colours $p$ and
$n$). The untyping theorem will ensure that there exists a typed
proof; indeed, the whole proof net can be coloured in a consistent
way.

\smallskip
\begin{center}
  \qquad\qquad
  \begin{minipage}{0.55\linewidth}
   $
      \inferrule*[Right=\scriptsize$\parr$]{
        \inferrule*[Right=\scriptsize E]{
          \inferrule*[Right=\scriptsize $\bot$]{
            \inferrule*[Right=\scriptsize$\otimes$]{
              \inferrule*[Right=\scriptsize$\parr$]{
                \inferrule*[Right=\scriptsize$\otimes$]{
                  \inferrule*[Right=\scriptsize  A]{
                  }{\pceq{x^\bot;x}} 
                  \and
                  \inferrule*[Right={\scriptsize E,A}]{ 
                  }{\pceq{y; y^\bot}} 
                }{\pceq{x^\bot;(x\otimes y); y^\bot}} 
              }{\pceq{x^\bot;(x\otimes y)\parr y^\bot}}
              \and
              \inferrule*[Right={\scriptsize E,A}]{ 
              }{\pceq{y;y^\bot}}
            }{\pceq{x^\bot;((x\otimes y)\parr y^\bot)\otimes y;y^\bot}}
          }{\pceq{\bot;x^\bot;((x\otimes y)\parr y^\bot)\otimes y;y^\bot}}
        }{\pceq{y^\bot ;\bot; x^\bot;((x\otimes y)\parr y^\bot)\otimes y}}
      }{\pceq{y^\bot\parr\bot\parr x^\bot;((x\otimes y)\parr y^\bot)\otimes y}}
    $
 \end{minipage}
 \hfill
 \begin{minipage}{0.32\linewidth}
   \begin{picture}(0,0)%
\includegraphics{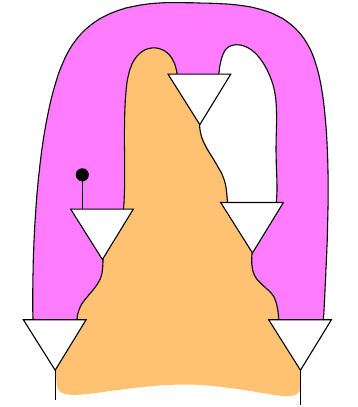}%
\end{picture}%
\setlength{\unitlength}{2763sp}%
\begingroup\makeatletter\ifx\SetFigFont\undefined%
\gdef\SetFigFont#1#2#3#4#5{%
  \reset@font\fontsize{#1}{#2pt}%
  \fontfamily{#3}\fontseries{#4}\fontshape{#5}%
  \selectfont}%
\fi\endgroup%
\begin{picture}(2441,2780)(3317,-861)
\put(4591, 19){\makebox(0,0)[b]{\smash{{\SetFigFont{8}{9.6}{\familydefault}{\mddefault}{\updefault}{\color[rgb]{0,0,0}$n$}%
}}}}
\put(4021,307){\makebox(0,0)[b]{\smash{{\SetFigFont{8}{9.6}{\familydefault}{\mddefault}{\updefault}{\color[rgb]{0,0,0}$\parr$}%
}}}}
\put(4692,1256){\makebox(0,0)[b]{\smash{{\SetFigFont{8}{9.6}{\familydefault}{\mddefault}{\updefault}{\color[rgb]{0,0,0}$\otimes$}%
}}}}
\put(3697,-452){\makebox(0,0)[b]{\smash{{\SetFigFont{8}{9.6}{\familydefault}{\mddefault}{\updefault}{\color[rgb]{0,0,0}$\parr$}%
}}}}
\put(5050,352){\makebox(0,0)[b]{\smash{{\SetFigFont{8}{9.6}{\familydefault}{\mddefault}{\updefault}{\color[rgb]{0,0,0}$\parr$}%
}}}}
\put(5010,973){\makebox(0,0)[b]{\smash{{\SetFigFont{8}{9.6}{\familydefault}{\mddefault}{\updefault}{\color[rgb]{0,0,0}$p$}%
}}}}
\put(3332,435){\makebox(0,0)[b]{\smash{{\SetFigFont{8}{9.6}{\familydefault}{\mddefault}{\updefault}{\color[rgb]{0,0,0}$p$}%
}}}}
\put(5743,267){\makebox(0,0)[b]{\smash{{\SetFigFont{8}{9.6}{\familydefault}{\mddefault}{\updefault}{\color[rgb]{0,0,0}$p$}%
}}}}
\put(4651,1617){\makebox(0,0)[b]{\smash{{\SetFigFont{8}{9.6}{\familydefault}{\mddefault}{\updefault}{\color[rgb]{0,0,0}$m$}%
}}}}
\put(5382,-428){\makebox(0,0)[b]{\smash{{\SetFigFont{8}{9.6}{\familydefault}{\mddefault}{\updefault}{\color[rgb]{0,0,0}$\otimes$}%
}}}}
\end{picture}%

 \end{minipage}
\end{center}

\bigskip\noindent
We now embark in the proof of the untyping theorem for cyclic
MLL; the key property is that the types of derivable sequents are all
squares:
\begin{prop}
  \label{prop:ceq:mono}
  If $\pceq l$ and $\typ l n m$, then $n=m$.
\end{prop}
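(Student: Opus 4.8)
The plan is to recast the statement relationally and then induct on the derivation of $\pceq l$. For a term $a$ write $R_a=\{(n,m)\mid\typ a n m\}$ for its \emph{type relation}, and likewise $R_l$ for a list; let $\Delta=\{(n,n)\mid n\in\mathcal T\}$ be the diagonal, and let $\mathbin{;}$ denote left-to-right relational composition. Reading off Def.~\ref{def:MLL:terms} together with rules \rul{Te},\rul{Tc}, one gets $R_x=\{\Gamma(x)\}$, $R_{x^\bot}=R_x^{-1}$ (Lemma~\ref{lem:lneg:facts}), $R_1=R_\bot=\Delta$, $R_{a\otimes b}=R_{a\parr b}=R_a\mathbin{;}R_b$, and $R_{a;l}=R_a\mathbin{;}R_l$. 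Since every generator is either the diagonal or a singleton (or a converse thereof), every $R_a$ and every $R_l$ is an \emph{injective partial function} on objects. In this vocabulary the Proposition is exactly the claim that $R_l\subseteq\Delta$ for every derivable $l$.

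First I would dispatch the axiom and the logical rules, which all reduce to the algebra of \emph{partial identities} (relations contained in $\Delta$). For \rul{A} we get $R_{x^\bot;x}=R_x^{-1}\mathbin{;}R_x\subseteq\Delta$, because $f^{-1}\mathbin{;}f\subseteq\Delta$ for every partial injection $f$. Rule \rul{1} is immediate as $R_{1}=\Delta$, and rule \rul{\bot} is immediate as $R_{\bot;l}=\Delta\mathbin{;}R_l=R_l\subseteq\Delta$ by induction. Rule \rul{\parr} is trivial because par merely reassociates: $R_{(a\parr b);l}=R_{a;b;l}$, so the hypothesis applies verbatim. The one substantial logical case is \rul{\otimes}, where $l=k;(a\otimes b);k'$ comes from $\pceq{k;a}$ and $\pceq{b;k'}$; here I would regroup $R_{k;(a\otimes b);k'}=(R_k\mathbin{;}R_a)\mathbin{;}(R_b\mathbin{;}R_{k'})$ and use that both factors $R_{k;a}$ and $R_{b;k'}$ are partial identities by the two induction hypotheses, together with the fact that a composite of two partial identities is again a partial identity.

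The main obstacle is the cyclic exchange rule \rul{E}, which derives $\pceq{a;l}$ from $\pceq{l;a}$ and therefore asks me to pass from $R_l\mathbin{;}R_a\subseteq\Delta$ (the hypothesis) to $R_a\mathbin{;}R_l\subseteq\Delta$. This does \emph{not} follow from the relational abstraction alone: for partial injections $S,T$ one may have $T\mathbin{;}S\subseteq\Delta$ while $S\mathbin{;}T\not\subseteq\Delta$, as with $S=\{(1,2)\}$ and $T=\{(2,3)\}$, where $T\mathbin{;}S=\emptyset$ but $S\mathbin{;}T=\{(1,3)\}$. What rules this out is that our relations are realised by a \emph{derivable} sequent: the pair $\{(1,2)\},\{(2,3)\}$ would require $\pceq{y;x}$ with $\Gamma(x)=(1,2),\Gamma(y)=(2,3)$, which is not provable, since in a derivable sequent the atom occurrences are paired by axiom links. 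I therefore expect to strengthen the induction hypothesis so as to record, beyond $R_l\subseteq\Delta$, that the head's domain meets the tail's codomain ($\mathrm{dom}(R_a)\subseteq\mathrm{codom}(R_l)$ at an \rul{E}-step); given this support alignment, an element $(u,w)\in R_a\mathbin{;}R_l$ forces $u$ into the support on which $R_l\mathbin{;}R_a$ is the identity, whence $u=w$ by injectivity, and the cyclic shift merely relabels the same supports. Establishing that support condition is the genuinely global ingredient — it is precisely the content that the outer colouring of a (planar) proof net propagates consistently around the ring — and I regard verifying it as the crux of the whole argument; the remaining bookkeeping is routine.
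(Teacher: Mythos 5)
Your relational reformulation is sound as far as it goes, and you have correctly located the difficulty. With the unstrengthened invariant $R_l\subseteq\Delta$, every rule except cyclic exchange goes through: your treatment of \rul{A}, \rul{1}, \rul{$\bot$}, \rul{$\parr$} is fine, and \rul{$\otimes$} is indeed immediate since a composite of two partial identities is a partial identity. Your counterexample showing that $R_l\mathbin{;}R_a\subseteq\Delta$ does not by itself yield $R_a\mathbin{;}R_l\subseteq\Delta$ is also correct. But the argument stops exactly there: the ``support condition'' $\mathrm{dom}(R_a)\subseteq\mathrm{codom}(R_l)$ that you invoke to rescue the \rul{E} case is itself a global property of derivable sequents that must be formulated for \emph{every} split of the sequent (since \rul{E} is applied repeatedly and interleaved with \rul{$\otimes$}) and then carried through the whole induction --- and you explicitly defer its verification. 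What you have is therefore a correct reduction of the Proposition to an unproved invariant, not a proof. The deferred invariant is also more delicate than the Proposition itself: it is an unconditional statement about domains and codomains, so, for instance, a split with $R_l=\emptyset$ would force $R_a=\emptyset$, whereas the Proposition (and the paper's strengthening of it) is vacuous on untypable lists and needs no such side conditions.

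The paper closes this gap by choosing a different strengthening, which in your vocabulary reads: for every split $l=h\mathbin{;}k$ of a derivable sequent, $R_{k;h}\subseteq\Delta$, i.e.\ \emph{all cyclic permutations} of $l$ have only square types. This invariant is manifestly stable under \rul{E} --- the set of cyclic permutations does not change --- so the difficulty migrates to the tensor rule, where it can actually be discharged: for a conclusion $l;l';a\otimes b;k$ with premises $l;l';a$ and $b;k$, and a cut placed inside the left premise, one types $\typ{l';a} n p$, $\typ{b;k} p q$, $\typ l q m$, applies the induction hypothesis to the \emph{second} premise to get $p=q$, and thereby recognises $l';a;l$ as a cyclic permutation of the first premise's conclusion, to which the induction hypothesis applies. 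I would recommend adopting that invariant: it makes \rul{E} trivial with no support bookkeeping, and the residual work in the \rul{$\otimes$} case is a three-line type computation rather than the proof-net-level connectivity argument you are gesturing at.
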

\begin{proof}
  We proceed by induction on the untyped derivation $\pceq l$, but we
  prove a stronger property: ``the potential types of all cyclic
  permutations of $l$ are squares'', i.e., for all $h$,$k$ such that
  $l=h;k$, for all $n,m$ such that $\typ{k;h} n m$,
  $n=m$. 
  The most involved case is that of the tensor rule. Using symmetry
  arguments, we can assume that the cutting point belongs to the left
  premise: the conclusion of the tensor rule is $\pceq l;l';a\otimes
  b;k$, we suppose that the induction hypothesis holds for $l;l';a$
  and $b;k$, and knowing that $\typ {l';a\otimes b;k;l} n m$, we have
  to show $n=m$. Clearly, we have $\typ {l';a} n p$, $\typ {b;k} p q$,
  and $\typ l q m$ for some $p,q$. By induction on the second premise,
  we have $p=q$, so that $\typ {l';a;l} n m$. Since the latter list is
  a cyclic permutation of $l;l';a$, we can conclude with the induction
  hypothesis on the first premise.
\end{proof}

\begin{thm}
  \label{thm:mll:untype}
  In cyclic MLL, if $\typ l n n$, then we have $\pceq l$ iff $\ceq n l$.
\end{thm}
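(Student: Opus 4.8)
The plan is to establish the two implications separately. The reverse implication, $\ceq n l \Rightarrow \pceq l$, is immediate: erasing every type annotation from a derivation in the typed system of Fig.~\ref{fig:typed-cMLL} turns each rule into its untyped counterpart, yielding a derivation of $\pceq l$. The genuine content is the forward implication, which is the untyping statement itself: from an untyped derivation of $\pceq l$ together with a square typing $\typ l n n$, I must rebuild a typed derivation $\ceq n l$. I would prove it by induction on the untyped derivation $\pceq l$, strengthening the statement so that the typing is universally quantified, i.e.\ proving that for every subderivation $\pceq{l'}$ and every object $n'$ with $\typ{l'}{n'}{n'}$ one has $\ceq{n'}{l'}$.

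For each rule the recipe is the same: decompose the given square typing of the conclusion along the list structure to recover the intermediate objects, observe that this induces a square typing of each premise, invoke the induction hypothesis, and reapply the corresponding typed rule. In the $\parr$ case the typing $\typ{a\parr b;l} n n$ decomposes as $\typ a n t$, $\typ b t s$, $\typ l s n$, which is precisely a square typing $\typ{a;b;l} n n$ of the premise. In the $\bot$ case, since $\bot$ admits only square types, the typing of $\bot;l$ forces $\typ l n n$ directly. The axiom rule \textbf{A} causes no trouble because a variable and its dual have uniquely determined, mutually mirrored types, so $x^\bot;x$ can only be typed as a square. The units ($1$, $\bot$) are harmless precisely because they are polymorphic only over square types.

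The main obstacle is the tensor rule, and it is exactly here that Proposition~\ref{prop:ceq:mono} is indispensable. After erasing types, this rule derives $\pceq{l;a\otimes b;k}$ from $\pceq{l;a}$ and $\pceq{b;k}$. Decomposing the square typing of the conclusion gives $\typ l n p$, $\typ a p r$, $\typ b r q$, $\typ k q n$ for some $p,q,r$, so the premise $l;a$ carries the \emph{a priori non-square} type $\h n r$. But the typed $\otimes$-rule forces both premises to share the same outer object $n$, so I must show $r=n$. This is precisely what Proposition~\ref{prop:ceq:mono} delivers: since $\pceq{l;a}$ is derivable and $\typ{l;a} n r$, squareness of derivable sequents yields $n=r$. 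Once $r=n$, both $l;a$ and $b;k$ acquire square types $\h n n$, the induction hypothesis produces $\ceq n{l;a}$ and $\ceq n{b;k}$, and the typed $\otimes$-rule gives $\ceq n{l;a\otimes b;k}$.

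The exchange rule \textbf{E} is handled in the same spirit: from $\typ{a;l} n n$ I read off $\typ a n m$ and $\typ l m n$, so the premise $l;a$ has square type $\h m m$; the induction hypothesis gives $\ceq m{l;a}$, and the typed rule \textbf{E}, whose side condition $\typ a n m$ is exactly what has been recovered, rebuilds $\ceq n{a;l}$. Geometrically, the whole argument is the assertion that a planar cyclic-MLL proof net whose outer interface can be coloured by objects can be coloured consistently throughout, with Proposition~\ref{prop:ceq:mono} guaranteeing that the two colours meeting at each tensor link agree.
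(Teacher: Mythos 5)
Your proposal is correct and follows essentially the same route as the paper: induction on the untyped derivation, with Proposition~\ref{prop:ceq:mono} invoked exactly where the paper uses it, namely to force the intermediate object at a tensor link to coincide with the outer object so that both premises acquire square types and the typed $\otimes$-rule applies. The additional detail you supply for the $\parr$, $\bot$, axiom and exchange cases is consistent with what the paper leaves implicit.
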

\begin{proof}
  The right-to-left implication is straightforward; for the direct
  implication, we proceed by induction on the untyped derivation. The
  previous proposition is required in the case of the tensor rule: we
  know that $\pceq{l;a}$, $\pceq{b;k}$, and $\typ{l;a\otimes b;k} n
  n$, and we have to show that $\ceq n {l;a\otimes b;k}$. Necessarily,
  there is some $m$ such that $\typ{l;a} n m$ and $\typ{b;k} m n$;
  moreover, by Prop.~\ref{prop:ceq:mono}, $n=m$.  Therefore, we can
  apply the induction hypotheses (so that $\ceq n {l;a}$ and $\ceq n
  {b;k}$) and we conclude with the typed tensor rule.
\end{proof}

\subsection{Intuitionistic fragment}~\medskip
\label{ss:imll} 

To deduce that the untyping theorem holds in residuated monoids, it
suffices to show that the typed version of the proof system
from~§\ref{ss:gpsrm} corresponds to the intuitionistic fragment of the
proof system from Fig.~\ref{fig:typed-cMLL}. This is well-known for
the untyped case, and type decorations do not add particular
difficulties. Therefore, we just give a brief overview of the extended
proof.

The idea is to define the following families of \emph{input} and
\emph{output} terms (Danos-Regnier
polarities~\cite{regnierPhD,BellinS94}), and to work with sequents
composed of exactly one output term and an arbitrary number of input
terms.
\begin{align*}
  \begin{array}{r@{}c@{\OR}c@{\OR}c@{\OR}c@{\OR}c}
    i ::= & ~x^\bot & \bot & i \parr i & i\otimes o & o\otimes i \\[.3em]
    o ::= & x       & 1    & o \otimes o & i\parr o & o\parr i\\ 
  \end{array}
\end{align*}

\noindent
Negation $(-^\bot)$ establishes a bijection between input and output
terms. Terms of residuated monoids (IMLL formulae) are encoded into
output terms as follows.
\begin{align*}
  \enc {a\cdot b} &\eqdef \enc a \otimes \enc b &
  \enc {a/ b} &\eqdef \enc a \parr \enc b^\bot &
  \enc x &\eqdef x \\
  \enc 1 &\eqdef 1 &
  \enc {a\backslash b} &\eqdef \enc a^\bot \parr \enc b
\end{align*}

\noindent
This encoding is a bijection between IMLL terms and MLL output terms;
it preserves typing judgements:
\begin{lem}
  \label{lem:enc:typed}
  For all $a,n,m$, we have $\typ a n m$ iff $\typ {\enc a} n m$.
\end{lem}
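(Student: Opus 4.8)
The plan is to prove the equivalence by a straightforward structural induction on the residuated-monoid term $a$, checking in each case that the typing rule for the constructor of $a$ is matched, under the encoding, by the corresponding MLL typing rule. The two base cases are immediate. For $a=x$ we have $\enc x=x$, and $\typ x n m$ and $\typ{\enc x} n m$ are literally the same instance of rule \rul{Tv}. For $a=1$ we have $\enc 1=1$, and both $\typ 1 n m$ (via \rul{To}) and $\typ{\enc 1} n m$ (the MLL unit rule) hold exactly when $n=m$.

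For the product, $a=b\cdot c$ with $\enc{b\cdot c}=\enc b\otimes\enc c$, the rule \rul{Td} for $\cdot$ and the MLL tensor rule have the same shape, so $\typ{b\cdot c} n p$ holds iff $\typ b n m$ and $\typ c m p$ for some $m$, iff (by the induction hypotheses) $\typ{\enc b} n m$ and $\typ{\enc c} m p$, iff $\typ{\enc b\otimes\enc c} n p$. The two division cases are where the only real work lies, and the essential ingredient is Lemma~\ref{lem:lneg:facts}, which converts a typing judgement for a negated term into one for the term itself with the type reversed. Take left division, $a=b\backslash c$, whose encoding is $\enc b^\bot\parr\enc c$. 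Reading rule \rul{Tl} backwards, $\typ{b\backslash c} p m$ holds iff there is some $n$ with $\typ c n m$ and $\typ b n p$. On the MLL side, the par rule gives that $\typ{\enc b^\bot\parr\enc c} p m$ holds iff there is some intermediate object $n$ with $\typ{\enc b^\bot} p n$ and $\typ{\enc c} n m$; by Lemma~\ref{lem:lneg:facts} the first conjunct is equivalent to $\typ{\enc b} n p$, and the induction hypotheses then rewrite $\typ{\enc b} n p$ and $\typ{\enc c} n m$ into $\typ b n p$ and $\typ c n m$, which is exactly the condition extracted from \rul{Tl}. Right division, $a=c/b$ with $\enc{c/b}=\enc c\parr\enc b^\bot$, is entirely symmetric: rule \rul{Tr} gives $\typ{c/b} n p$ iff $\typ c n m$ and $\typ b p m$ for some $m$, while the par rule together with Lemma~\ref{lem:lneg:facts} and the induction hypotheses yield the same condition for $\typ{\enc c\parr\enc b^\bot} n p$.

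I expect no genuine difficulty in this argument; the only thing to watch is the index bookkeeping in the division cases. There, negation flips the direction of a type (turning $\h p n$ into $\h n p$) and the negated subterm is carried to the appropriate side of the $\parr$, so one must check that the object playing the intermediate role in the par rule coincides with the object that is existentially quantified in rules \rul{Tl} and \rul{Tr}. Once these are lined up, the equivalence closes under the induction hypotheses, completing the proof.
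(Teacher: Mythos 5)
Your proof is correct: the paper states this lemma without proof (deferring to the Coq development), and your structural induction---with the division cases handled by combining the par typing rule with Lemma~\ref{lem:lneg:facts} to flip the type of the negated subterm---is exactly the routine argument the paper intends. No gaps.
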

\noindent 
(Note that we heavily rely on overloading to keep notation simple.)
The next proposition shows that we actually obtained a fragment of
typed cyclic MLL; it requires the lemma below: input-only lists are
not derivable. The untyping theorem for residuated monoids follows
using Thm.~\ref{thm:mll:untype}.
\begin{lem}
  \label{lem:pceq:ouput}
  If $\pceq l$, then $l$ contains at least one output term.
\end{lem}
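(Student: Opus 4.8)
The plan is to proceed by structural induction on the derivation of $\pceq l$. The naive induction hypothesis — ``$l$ contains at least one output term'' — does not close under the $\parr$ rule: if the premise's only output is one of the two formulas being par'd and its companion is also an output, their par is of the shape $o\parr o$, which lies outside the output grammar, so the conclusion could in principle be output-free. I would therefore strengthen the statement to match the text's own gloss of the lemma (``input-only lists are not derivable''), i.e.\ prove by induction that every derivable $l$ contains at least one \emph{non-input} term. This is precisely the reading the proposition downstream requires, and for lists all of whose terms are polarised it coincides with the stated ``at least one output term''.

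The two combinatorial facts I would read off from the input/output grammars are: \textbf{(a)} $a\parr b$ is an input term only when both $a$ and $b$ are inputs (the sole $\parr$-production for $i$ is $i\parr i$); and \textbf{(b)} $a\otimes b$ is an input term only when exactly one of $a,b$ is an input and the other an output. Equivalently, $\parr$ preserves ``non-input'' in either argument, whereas $\otimes$ yields a non-input term as soon as \emph{both} of its arguments are non-input.

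With these in hand the cases become routine. Rules $1$ and $A$ produce the terms $1$ and $x$, which are outputs, hence non-input. Rule $\bot$ prepends the input $\bot$ to a list that already carries a non-input witness by induction, and rule $E$ merely permutes the list; both preserve the property, which depends only on the multiset of terms. For the $\parr$ rule, with premise $\pceq{a;b;l}$, the non-input witness is either in $l$ — where it survives into the conclusion — or is $a$ or $b$, in which case fact \textbf{(a)} makes $a\parr b$ non-input. For the $\otimes$ rule, with premises $\pceq{l;a}$ and $\pceq{b;k}$, I would locate the two witnesses: if either lies in $l$ or $k$ we are immediately done, and otherwise the witnesses must be $a$ and $b$ themselves, so fact \textbf{(b)} makes $a\otimes b$ non-input.

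The genuine obstacle is not any individual case but choosing the invariant: once ``output'' is relaxed to ``non-input'', the closure facts \textbf{(a)}–\textbf{(b)} make every case fall out directly, while the literal formulation fails at $\parr$. I would single out the $\otimes$ case as the one needing a little care, since it is there that one must first argue that the companion formulas $a,b$ are the ones carrying the non-input witnesses before invoking \textbf{(b)}.
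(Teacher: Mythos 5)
Your proof is correct, and induction on the derivation with a polarity invariant is exactly the intended argument (the paper states this lemma without a written proof, deferring to its Coq development). Your refinement of the invariant from ``contains an output term'' to ``contains a non-input term'' is the right move and is in fact necessary: the literal statement fails on unpolarised lists (e.g.\ $\pceq{x\parr y;\,y^\bot\otimes x^\bot}$ is derivable yet contains no term generated by the output grammar), whereas the non-input form is what the downstream proposition actually needs and coincides with the stated form on the polarised lists arising there.
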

\begin{prop}
  If $\typ {l,a} n m$, then $\seq l a n m$ iff $\ceq m {\enc
    l^\bot;\enc a}$.
\end{prop}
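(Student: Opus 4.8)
The plan is to prove both implications by induction on derivations, exhibiting a rule-by-rule correspondence between the typed Gentzen system of Figure~\ref{fig:imll} and typed cyclic MLL (Figure~\ref{fig:typed-cMLL}), transported along the encoding $\enc{-}$. Throughout, the type annotations are handled using Lemma~\ref{lem:enc:typed} (the encoding preserves typing) and Lemma~\ref{lem:lneg:facts} (negation is involutive and mirrors types): from $\typ l n m$ and $\typ a n m$ we obtain $\typ{\enc a} n m$ and $\typ{\enc l^\bot} m n$, so that the list $\enc l^\bot;\enc a$ indeed has the square type $\h m m$ recorded by the judgement $\ceq m {\enc l^\bot;\enc a}$.

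For the left-to-right implication, I would induct on the derivation of $\seq l a n m$ and translate each rule. The matching is the expected one for the intuitionistic-to-classical embedding: axioms \rul v and \rul{Io} become \rul A and \rul 1; the product rules become \rul$\otimes$ (rule \rul{Id}, using $\enc{a\cdot a'}=\enc a\otimes\enc{a'}$) and a \rul$\parr$ on an input par (rule \rul{Ed}, using $(b\cdot c)^\bot=c^\bot\parr b^\bot$); the right-introduction rules for divisions \rul{Ir},\rul{Il} become \rul$\parr$ (using $\enc{a/b}=\enc a\parr\enc b^\bot$ and $\enc{b\backslash a}=\enc b^\bot\parr\enc a$); the elimination rules \rul{Er},\rul{El} become \rul$\otimes$ on an input tensor (using $(\enc{c/b})^\bot=\enc b\otimes\enc c^\bot$ and symmetrically); and the unit-elimination \rul{Eo} becomes \rul$\bot$ (since $1^\bot=\bot$). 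In each case the decomposed connective is not at the head of the one-sided sequent, so the corresponding logical rule is preceded and followed by applications of the cyclic-exchange rule \rul E; the object annotations of these rotations are read off from the square-type discipline, which is routine bookkeeping.

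For the right-to-left implication I would induct on the derivation of $\ceq m {\enc l^\bot;\enc a}$, reading each cyclic MLL rule backwards into a Gentzen rule. The central observation is that $\enc l^\bot;\enc a$ lies in the intuitionistic fragment: it contains exactly one output term, namely $\enc a$, all terms of $\enc l^\bot$ being input (negations of output terms). I would show that this invariant is preserved as one moves up the derivation, which determines uniquely the Gentzen rule each MLL rule must come from: a \rul$\parr$ decomposes either the output succedent (giving \rul{Ir}/\rul{Il}) or an input par arising from a product on the left (giving \rul{Ed}); a \rul$\otimes$ decomposes either the output succedent $\enc a=\enc{a_1}\otimes\enc{a_2}$ (giving \rul{Id}) or an input tensor arising from a division on the left (giving \rul{Er}/\rul{El}); the \rul$\bot$, \rul 1 and \rul A rules give \rul{Eo}, \rul{Io} and \rul v.

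The main obstacle is the tensor rule in this backward direction. When \rul$\otimes$ splits the sequent into two premises, one must check that both premises are again in the fragment, i.e.\ each carries exactly one output. When the decomposed tensor is the output succedent $\enc{a_1}\otimes\enc{a_2}$ this is immediate, each premise inheriting one of the two outputs; but when it is an input tensor $\enc b\otimes\enc c^\bot$ (with $\enc b$ output and $\enc c^\bot$ input) the argument is more delicate, and this is exactly where Lemma~\ref{lem:pceq:ouput} is needed. A priori the unique output $\enc a$ of the whole sequent could fall into either premise; but if it fell into the premise already containing the output $\enc b$, the other premise would be input-only, hence underivable by Lemma~\ref{lem:pceq:ouput}. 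Thus $\enc a$ must join $\enc c^\bot$, leaving $\enc b$ as the sole output of the first premise, which is precisely the shape demanded by rules \rul{Er}/\rul{El}. Once the split is forced, the intermediate object required by those rules is supplied by the square-type property of cyclic MLL sequents, and the induction goes through.
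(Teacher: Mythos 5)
Your plan coincides with the paper's proof in its essential ingredients: the same rule-by-rule translation for the forward direction, and, for the reverse direction, the same two key tools, namely Lemma~\ref{lem:pceq:ouput} to force the polarity of an input tensor (hence the choice between \rul{Er} and \rul{El}) and the square-type property (Prop.~\ref{prop:ceq:mono}) to recover the intermediate object needed by the binary Gentzen rules. Whether one inducts on the typed or on the untyped MLL derivation is immaterial.

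There is, however, one missing idea in the reverse direction, and it is precisely the one the paper flags as necessary. You propose to induct on a derivation of $\ceq m {\enc l^\bot;\enc a}$, i.e.\ on sequents whose unique output term sits at the \emph{end} of the list. This shape is not stable under the exchange rule \rul E: the premise of \rul E is a rotation of its conclusion, so after one rotation the output term lands in the middle of the list and your induction hypothesis no longer applies. Your invariant ``exactly one output term'' does survive rotation, but the statement you are proving does not. The paper repairs this by proving a strengthened claim in which the output term may occur anywhere: for all $h,a,k,n,m$ such that $\pceq{\enc h^\bot;\enc a;\enc k^\bot}$, $\typ {h;k} n m$ and $\typ a n m$, one has $\seq {h;k} a n m$; the proposition is the instance $k=\epsilon$. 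Some such generalisation (or an argument permuting \rul E away) is indispensable for the induction to go through. A minor further point: in the input-tensor case the split of the context between the two premises is dictated by the position of the tensor in the list, not chosen; what Lemma~\ref{lem:pceq:ouput} actually pins down is which of the two subterms of that tensor is the output one, and hence which division the term decodes to --- your phrasing in terms of ``where $\enc a$ falls'' reaches the same conclusion but misattributes the degree of freedom.
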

\begin{proof}
  The forward implication is proved by an induction on the sequent
  derivation. For the reverse direction, we actually prove the
  following stronger property, by induction on the untyped MLL
  derivation:
  \begin{quote}
    ``for all $h,a,k,n,m$ such that we have $\pceq {\enc h^\bot;\enc
      a;\enc k^\bot}$, $\typ {h;k} n m$, and $\typ a n m$, we have
    $\seq {h;k} a n m$''.
  \end{quote}
  This generalisation is required to handle the exchange rule.
  We detail only the key cases: %
  \begin{iteMize}{$\bullet$}
  \item If the tensor rule was used last, on the output term (which
    was thus of the form $\enc{a\cdot{}b}=\enc a\otimes \enc b$):
    \begin{align*}
      \inferrule*[Right=$\otimes$]{\pceq {\enc h^\bot;\enc a}\and
        \pceq {\enc b;\enc k^\bot}}{\pceq {\enc h^\bot;\enc a\otimes
          \enc b;\enc k^\bot}}
    \end{align*}
    Since $\typ {a\cdot b} n m$, and $\typ {h;k} n m$, we have $p,q$
    such that $\typ a n p$, $\typ b p m$, $\typ h n q$ and $\typ k q m$.
    Therefore, by Lemmas~\ref{lem:lneg:facts} and~\ref{lem:enc:typed},
    we have $\typ {\enc h^\bot;\enc a} q p$, whence $p=q$ by
    Prop.~\ref{prop:ceq:mono}. We can thus apply the induction
    hypothesis to the two premises to obtain $\seq h a n p$ and $\seq
    k b p m$ (using an empty sequence in both cases). We conclude
    using rule \rul{Id} from Fig.~\ref{fig:uimll}.

  \item If the tensor rule was used last, on one of the input terms,
    say on $b$ in $h=h_1;b;h_2$, with $\enc b^\bot=c\otimes d$:
    \begin{align*}
      \inferrule*[Right=$\otimes$] {\pceq {\enc {h_2}^\bot;c}\and
        \pceq {d;\enc {h_1}^\bot,\enc a,k^\bot}} {\pceq {\enc
          {h_2}^\bot;c\otimes d;\enc {h_1}^\bot;\enc a;\enc
          k^\bot}}
    \end{align*}
    Since $\enc h_2^\bot;c$ is provable and $\enc h_2^\bot$ contains
    only input terms, $c$ is necessarily an output term by
    Lemma~\ref{lem:pceq:ouput}. Therefore there is only one
    possibility ensuring $\enc b = d^\bot\parr c^\bot$: the term $b$ must
    be of the form $d'/c'$, with $\enc{d'}=d^\bot$ and $\enc{c'}=c$.

    We have $\typ {h_1;d'/c';h_2;k} n m$ and $\typ a n m$, i.e., $\typ
    {h_1} n p$, $\typ {d'} p q$, $\typ {c'} r q$, $\typ {h_2} r s$,
    and $\typ {k} s m$ for some $p,q,r,s$.  We first notice that the
    provable sequent ${\enc {h_2}^\bot;c}$ has type $\h s q$, so that
    $s=q$ by Prop.~\ref{prop:ceq:mono}. By induction, we then deduce
    $\seq {h_2} {c'} r q$ and $\seq {h_1;d';k} a n m$, and we conclude
    using rule \rul{Er} from Fig.~\ref{fig:uimll}. \qedhere
  \end{iteMize}
\end{proof}

\begin{cor}
  \label{cor:imll:untype}
  In residuated monoids, if $\typ {l,a} n m$, then we have $\pseq l a$
  iff $\seq l a n m$.
\end{cor}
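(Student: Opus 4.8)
The plan is to reduce the statement to the untyping theorem for cyclic MLL (Theorem~\ref{thm:mll:untype}) through the encoding $\enc{-}$, using the proposition just above as the bridge between the two sequent systems. The reverse implication, that $\seq l a n m$ entails $\pseq l a$, is obtained by simply erasing all type decorations from the typed Gentzen derivation, exactly as in the previous sections; so all the work lies in the forward, untyping, implication.

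First I would record the untyped counterpart of the forward direction of the preceding proposition, namely that $\pseq l a$ entails $\pceq{\enc l^\bot;\enc a}$. This is the well-known embedding of the Gentzen system for the Lambek calculus into untyped cyclic MLL, and it is established by the very same induction on the sequent derivation that proves the forward direction of the preceding proposition, with all type annotations erased. Thus this step costs nothing beyond what has already been done.

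Next, assuming $\typ{l,a} n m$ and $\pseq l a$, I would compute the type of the one-sided MLL sequent $\enc l^\bot;\enc a$ and check that it is a square. From $\typ a n m$ and $\typ l n m$, Lemma~\ref{lem:enc:typed} (applied pointwise to the list $l$ via the list typing rules \rul{Te} and \rul{Tc}) gives $\typ{\enc a} n m$ and $\typ{\enc l} n m$; Lemma~\ref{lem:lneg:facts} then yields $\typ{\enc l^\bot} m n$, so the concatenation has type $\typ{\enc l^\bot;\enc a} m m$, i.e.\ the square type $\h m m$. Combining this with $\pceq{\enc l^\bot;\enc a}$ from the previous step, Theorem~\ref{thm:mll:untype} produces a typed cyclic-MLL derivation $\ceq m{\enc l^\bot;\enc a}$. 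Feeding this into the reverse direction of the preceding proposition, whose hypothesis $\typ{l,a} n m$ is exactly ours, delivers $\seq l a n m$, as required.

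All steps are routine once the ingredients are assembled; the only point needing care is the type bookkeeping showing that $\enc l^\bot;\enc a$ carries the square type $\h m m$, since it is precisely this squareness — guaranteed in general by Proposition~\ref{prop:ceq:mono} and consumed by Theorem~\ref{thm:mll:untype} — that makes the whole reduction go through. I therefore expect no genuine obstacle here, the real difficulty having already been absorbed into the symmetric, one-sided cyclic-MLL development that was set up precisely to avoid the problematic transitivity/cut behaviour of the divisions.
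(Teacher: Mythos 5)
Your proof is correct and follows exactly the route the paper intends (the paper leaves the corollary's proof implicit, saying only that it "follows using Thm.~\ref{thm:mll:untype}" via the preceding proposition): embed the untyped Gentzen derivation into untyped cyclic MLL, check via Lemmas~\ref{lem:enc:typed} and~\ref{lem:lneg:facts} that $\enc l^\bot;\enc a$ has the square type $\h m m$, apply Theorem~\ref{thm:mll:untype}, and come back through the reverse direction of the proposition. One small simplification you could note: the untyped embedding $\pseq l a \Rightarrow \pceq{\enc l^\bot;\enc a}$ need not be reproved, since the untyped judgements are just the typed ones instantiated at the one-object environment $\Hamma$, so it is literally an instance of the proposition's forward direction.
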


\subsection{Residuated lattices: additives.}~\medskip
\label{ss:imall} 

The Gentzen proof system we presented for residuated monoids
(Fig.~\ref{fig:uimll}) was actually designed for residuated
\emph{lattices}~\cite{OnoK85}, obtained by further requiring the
partial order $(X,\leq)$ to be a lattice $(X,\vee,\wedge)$. Binary
relations fall into this family, by considering set-theoretic unions
and intersections.  The previous proofs scale without major
difficulty: on the logical side, this amounts to considering the
additive binary connectives $(\oplus,\with)$. By working in
multiplicative additive linear logic (MALL) without additive
constants, we get an untyping theorem for \emph{involutive residuated
  lattices}~\cite{Wille03:involrl}; we deduce the untyping theorem for
residuated lattices by considering the corresponding intuitionistic
fragment (see~\cite{this:web} for proofs).

%

\medskip

On the contrary, and rather surprisingly, the theorem breaks if we
include additive constants $(0,\top)$, or equivalently, if we consider
\emph{bounded} residuated lattices. The corresponding typing rules are
given below, together with the logical rule for top (there is no rule
for zero).
\begin{mathpar}
  \inferrule*[Right=T$_0$]{ }{\typ 0 n m} \and
  \inferrule*[Right=T$_\top$]{ }{\typ \top n m} \and
  \inferrule*[Right=$\top$]{\typ l m n}{\ceq n {\top;l}} 
\end{mathpar}
The sequent $x^\bot\otimes\top;y^\bot;\top\otimes x$ gives a
counter-example. This sequent basically admits the two following
untyped proofs:
\begin{mathpar}
  \inferrule*[Right=\scriptsize E]{ %
    \inferrule*[Right=\scriptsize$\otimes$]{ %
      \inferrule*[Right=\scriptsize{E,$\top$}]{ }{\pceq{y^\bot;\top}} %
      \and %
      \inferrule*[Right=\scriptsize$\otimes$]{ %
        \inferrule*[Right=\scriptsize{E,A}]{ }{\pceq{x;x^\bot}} %
        \and %
        \inferrule*[Right=\scriptsize$\top$]{ }{\pceq{\top}} %
      }{\pceq{x;x^\bot\otimes\top}} %
    }{\pceq{y^\bot;\top\otimes x;x^\bot\otimes\top}} %
  }{\pceq{x^\bot\otimes\top;y^\bot;\top\otimes x}} %
  \and %
  \inferrule*[Right=\scriptsize{E,E}]{ %
    \inferrule*[Right=\scriptsize$\otimes$]{ %
      \inferrule*[Right=\scriptsize$\top$]{ }{\pceq{\top}} %
      \and %
      \inferrule*[Right=\scriptsize$\otimes$]{ %
        \inferrule*[Right=\scriptsize{E,A}]{ }{\pceq{x;x^\bot}} %
        \and %
        \inferrule*[Right=\scriptsize $\top$]{ }{\pceq{\top;y^\bot}} %
      }{\pceq{x;x^\bot\otimes\top;y^\bot}} %
    }{\pceq{\top\otimes x;x^\bot\otimes\top;y^\bot}} %
 }{\pceq{x^\bot\otimes\top;y^\bot;\top\otimes x}} %
\end{mathpar}
However, this sequent admits the square type $\h m m$ whenever
$\Gamma(x)=(n,m)$ and $\Gamma(y)=(p,q)$, while the above proofs cannot
be typed unless $n=q$ or $n=p$, respectively. Graphically, these
proofs correspond to the proof nets below (where the proof net
construction for rule \rul{$\top$} is depicted on the left-hand side);
these proof nets cannot be coloured unless $n=q$ or $n=p$. 

\smallskip
\begin{center}
  \begin{picture}(0,0)%
\includegraphics{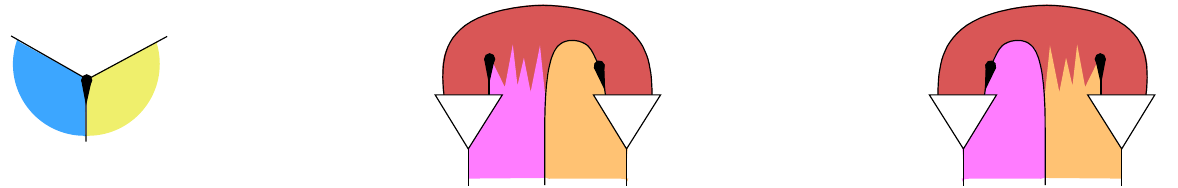}%
\end{picture}%
\setlength{\unitlength}{2960sp}%
\begingroup\makeatletter\ifx\SetFigFont\undefined%
\gdef\SetFigFont#1#2#3#4#5{%
  \reset@font\fontsize{#1}{#2pt}%
  \fontfamily{#3}\fontseries{#4}\fontshape{#5}%
  \selectfont}%
\fi\endgroup%
\begin{picture}(7516,1192)(772,-861)
\put(7955,-428){\makebox(0,0)[b]{\smash{{\SetFigFont{9}{10.8}{\familydefault}{\mddefault}{\updefault}{\color[rgb]{0,0,0}$\otimes$}%
}}}}
\put(1090,-298){\makebox(0,0)[b]{\smash{{\SetFigFont{9}{10.8}{\familydefault}{\mddefault}{\updefault}{\color[rgb]{0,0,0}$m$}%
}}}}
\put(1918,143){\makebox(0,0)[b]{\smash{{\SetFigFont{9}{10.8}{\familydefault}{\mddefault}{\updefault}{\color[rgb]{0,0,0}$l_i$}%
}}}}
\put(787,148){\makebox(0,0)[b]{\smash{{\SetFigFont{9}{10.8}{\familydefault}{\mddefault}{\updefault}{\color[rgb]{0,0,0}$l_1$}%
}}}}
\put(3765,-428){\makebox(0,0)[b]{\smash{{\SetFigFont{9}{10.8}{\familydefault}{\mddefault}{\updefault}{\color[rgb]{0,0,0}$\otimes$}%
}}}}
\put(3760,-191){\makebox(0,0)[b]{\smash{{\SetFigFont{9}{10.8}{\familydefault}{\mddefault}{\updefault}{\color[rgb]{0,0,0}$n$}%
}}}}
\put(4790,-191){\makebox(0,0)[b]{\smash{{\SetFigFont{9}{10.8}{\familydefault}{\mddefault}{\updefault}{\color[rgb]{0,0,0}$n$}%
}}}}
\put(7209,-733){\makebox(0,0)[b]{\smash{{\SetFigFont{9}{10.8}{\familydefault}{\mddefault}{\updefault}{\color[rgb]{0,0,0}$q$}%
}}}}
\put(7702,-733){\makebox(0,0)[b]{\smash{{\SetFigFont{9}{10.8}{\familydefault}{\mddefault}{\updefault}{\color[rgb]{0,0,0}$p$}%
}}}}
\put(6617,-733){\makebox(0,0)[b]{\smash{{\SetFigFont{9}{10.8}{\familydefault}{\mddefault}{\updefault}{\color[rgb]{0,0,0}$m$}%
}}}}
\put(8273,-733){\makebox(0,0)[b]{\smash{{\SetFigFont{9}{10.8}{\familydefault}{\mddefault}{\updefault}{\color[rgb]{0,0,0}$m$}%
}}}}
\put(7960,-191){\makebox(0,0)[b]{\smash{{\SetFigFont{9}{10.8}{\familydefault}{\mddefault}{\updefault}{\color[rgb]{0,0,0}$n$}%
}}}}
\put(6930,-191){\makebox(0,0)[b]{\smash{{\SetFigFont{9}{10.8}{\familydefault}{\mddefault}{\updefault}{\color[rgb]{0,0,0}$n$}%
}}}}
\put(4778,-428){\makebox(0,0)[b]{\smash{{\SetFigFont{9}{10.8}{\familydefault}{\mddefault}{\updefault}{\color[rgb]{0,0,0}$\otimes$}%
}}}}
\put(5103,-733){\makebox(0,0)[b]{\smash{{\SetFigFont{9}{10.8}{\familydefault}{\mddefault}{\updefault}{\color[rgb]{0,0,0}$m$}%
}}}}
\put(3447,-733){\makebox(0,0)[b]{\smash{{\SetFigFont{9}{10.8}{\familydefault}{\mddefault}{\updefault}{\color[rgb]{0,0,0}$m$}%
}}}}
\put(4510,-733){\makebox(0,0)[b]{\smash{{\SetFigFont{9}{10.8}{\familydefault}{\mddefault}{\updefault}{\color[rgb]{0,0,0}$p$}%
}}}}
\put(6942,-428){\makebox(0,0)[b]{\smash{{\SetFigFont{9}{10.8}{\familydefault}{\mddefault}{\updefault}{\color[rgb]{0,0,0}$\otimes$}%
}}}}
\put(4000,-733){\makebox(0,0)[b]{\smash{{\SetFigFont{9}{10.8}{\familydefault}{\mddefault}{\updefault}{\color[rgb]{0,0,0}$q$}%
}}}}
\put(1353, 31){\makebox(0,0)[b]{\smash{{\SetFigFont{9}{10.8}{\familydefault}{\mddefault}{\updefault}{\color[rgb]{0,0,0}$\dots$}%
}}}}
\put(1324,-748){\makebox(0,0)[b]{\smash{{\SetFigFont{9}{10.8}{\familydefault}{\mddefault}{\updefault}{\color[rgb]{0,0,0}$\top$}%
}}}}
\put(1531,-313){\makebox(0,0)[b]{\smash{{\SetFigFont{9}{10.8}{\familydefault}{\mddefault}{\updefault}{\color[rgb]{0,0,0}$n$}%
}}}}
\end{picture}%

\end{center}

\noindent
This counter-example for MALL also gives a counter-example for IMALL:
the above proofs translate to intuitionistic proofs of
$\pseq{y\cdot(\top\backslash x)}{\top\cdot x}$, which is also not
derivable in the typed setting, unless $n=q$ or $n=p$. 

The problem is actually even stronger: while $S\cdot(\top\backslash R)
\subseteq \top\cdot R$ holds for all homogeneous binary relations
$R,S$ (by the above untyped proofs, for example), this law does not
hold for arbitrary heterogeneous relations (see
Remark~\ref{rem:counter} below).
This shows that we cannot always reduce the analysis of typed
structures to that of the underlying untyped structures. Here, the
equational theory of heterogeneous binary relations does not reduce to
the equational theory of homogeneous binary relations.

\begin{rem}\label{rem:counter}
  The containment $S\cdot(\top\backslash R) \subseteq \top\cdot R$
  does not necessarily hold for all heterogeneous binary relations
  $R,S$, although it holds for all heterogeneous binary relations on
  non-empty sets.
\end{rem}
\begin{proof}
  Let $A,B,C,D$ be four sets, let $R\subseteq B\times C$ be a binary
  relation from $B$ to $C$, and let $S\subseteq D\times A$ be a binary
  relation from $D$ to $A$. To be precise, we denote by $\top_{X,Y}$
  the full relation between sets $X$ and $Y$ ($X\times Y$), and the
  containment from the statement can be rewritten as
  \begin{align*}
    S\cdot(\top_{B,A}\backslash R) \subseteq \top_{D,B}\cdot
    R\enspace.
  \end{align*}
  For all relations $T\subseteq B\times A$, the relation $T\backslash
  R$ is characterised as follows:
  \begin{align*}
    T\backslash R &= \set[\forall k\in B, (k,i)\in T \to (k,j)\in
    R]{(i,j)\in A\times C}\enspace.
  \end{align*}
  \begin{iteMize}{$\bullet$}
  \item if $B$ is the empty set, then $R=\top_{D,B}\cdot R=\emptyset$,
    and by the above characterisation, we have $\top_{B,A}\backslash
    R=A\times C$. Therefore, we can contradict the containment by
    taking any non-empty relation for $S$.
    (Note that this cannot happen in an homogeneous setting: we have
    $A=B=C=D$ so that taking the empty set for $B$ forces both $R$ and
    $S$ to be empty.)
  \item if $B$ is not empty, then we have 
    \begin{align*}
      \top_{B,A}\backslash R 
      &= \set[\forall k\in B, (k,i)\in\top_{B,A} \to (k,j)\in R]{(i,j)\in A\times C} \\
      &= \set[\forall k\in B, (k,j)\in R]{(i,j)\in A\times C} \\
      &\subseteq \set[\exists k\in B, (k,j)\in R]{(i,j)\in
        A\times C}\\
      &= \set[\exists k\in B, (i,k)\in \top_{A,B} \land (k,j)\in R]{(i,j)\in
        A\times C}\\
      &= \top_{A,B}\cdot R\enspace;
    \end{align*}
    Therefore, since $S\subseteq \top_{D,A}$, we can conclude: 
    \begin{align*}
      S\cdot(\top_{B,A}\backslash R) \subseteq
      \top_{D,A}\cdot\top_{A,B}\cdot R \subseteq \top_{D,B}\cdot
      R\enspace. 
    \end{align*}\qedhere
  \end{iteMize}
\end{proof}
\noindent 
We do not know whether relations on empty sets are required to get such
a counter-example in the model of binary relations. In other words,
for the signature of bounded residuated lattices, does the equational
theory of heterogeneous binary relations on non-empty sets reduce to
the equational theory of homogeneous binary relations?

\section{Improving proof search for residuated structures.}
\label{sec:optim}

The sequent proof systems we mentioned in the previous section have
the sub-formula property, so that provability is decidable in each
case, using a simple proof search
algorithm~\cite{OkadaT99}. Surprisingly, the concept of type can be
used to cut off useless branches. Indeed, recall
Prop.~\ref{prop:ceq:mono}: ``the types of any derivable sequent are
squares''. By contrapositive, given an untyped sequent $l$, one can
easily compute an abstract `most general type and environment' $(\h n
m,\Gamma)$, such that $\typ[\Gamma] l n m$ holds (taking $\mathbb N$
as the set of objects, for example); if $n\neq m$, then the sequent is
not derivable, and proof search can fail immediately on this sequent.

We did some experiments with a simple prototype~\cite{this:web}: we
implemented focused~\cite{Andreoli92:focalisation} proof search for
cyclic MALL, i.e., a recursive algorithm composed of an
\emph{asynchronous} phase which is deterministic and a
\emph{synchronous} phase, where branching occurs (e.g., when applying
the tensor rule \rul{$\otimes$}). The optimisation consists in
checking that the most general type of the sequent is square before
entering the synchronous phases. The overall complexity remains
exponential (provability is
NP-complete~\cite{Pentus06}---PSPACE-complete with
additives~\cite{kanovich65}) but we get an exponential speed-up: we
can abort proof search immediately on approximately two sequents out
of three.

The experimental results are given on Fig.~\ref{fig:bench}
and~\ref{fig:bench:distrib}---raw data is available
from~\cite{this:web}. We generate (pseudo) random sequents in normal
form with respect to the laws of neutral elements for multiplicative
constants ($1$ and $\bot$), with a given number of leaves (variables,
dual variables or constants), and where variables are picked in a set
of the specified size. E.g., $a\otimes \bot;b^\bot$ is a sequent with
three leaves and two variables, which can also be considered as a
sequent with three leaves and four variables, where two variables are
not used.

\begin{figure}[t]
  \noindent
  \begin{minipage}[t]{.49\linewidth}
    \includegraphics{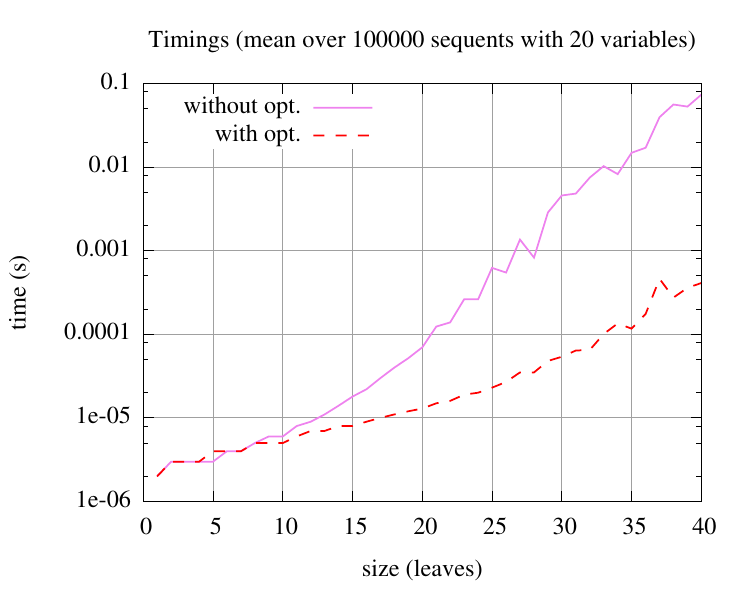}
  \end{minipage}
  \begin{minipage}[t]{.49\linewidth}
    \includegraphics{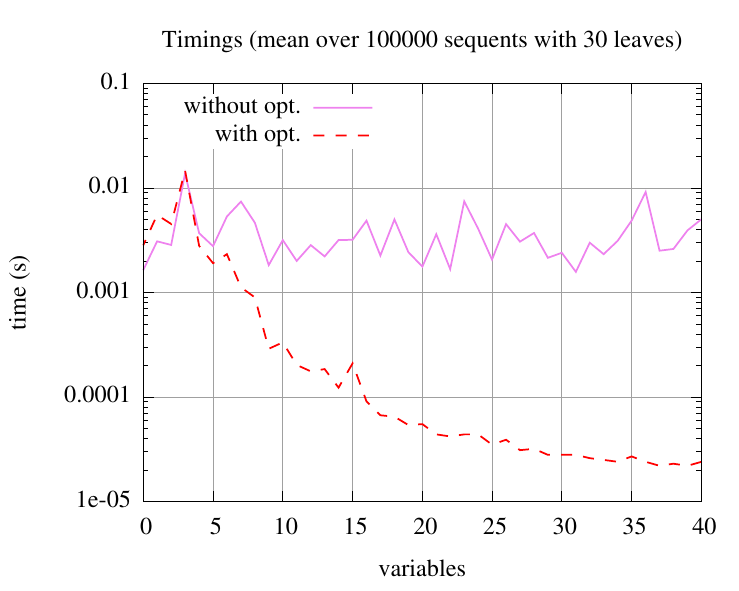}    
  \end{minipage}
  \caption{Searching times for focused proof search with and without
    optimisation.}
  \label{fig:bench}
\end{figure}

Each point of Fig.~\ref{fig:bench} was obtained by timing focused
proof search with and without optimisation, on a set of 100~000
sequents with the given characteristics: fixed number of variables and
varying size on the left-hand side, fixed size and varying number of
variables on the right-hand side. While the optimisation introduces
a small amount of overhead for very small sequents or sequents with few
variables, we gain more than one order of magnitude for larger
sequents. One can also notice that the more variables are available,
the more efficient the optimisation is: indeed, sequents with a lot of
different variables tend to have non-square types more easily, so that
they can be ruled out more frequently.

We did not report standard deviation in Fig.~\ref{fig:bench} since it
does not make sense in this setting: we have an unbounded set of
potential values, and the actual complexity of proof search is highly
stochastic. Instead, we computed the distribution of searching times:
Fig.~\ref{fig:bench:distrib} shows the proportion of sequents that are
solved in a given amount of time, among sequents with a fixed size and
number of variables---here, 30 leaves and 20 variables. While 60\% of
the sequents are solved in less that 10$^{-5}$s (with or without
optimisation), some of them require much more time: up to five minutes
without optimisation, and up to three seconds with the
optimisation. All in all, the overhead which is paid on `easily
solved' sequents gets compensated by the drastic improvement on
`harder' sequents.

\begin{figure}[t]
  \centering
  \includegraphics{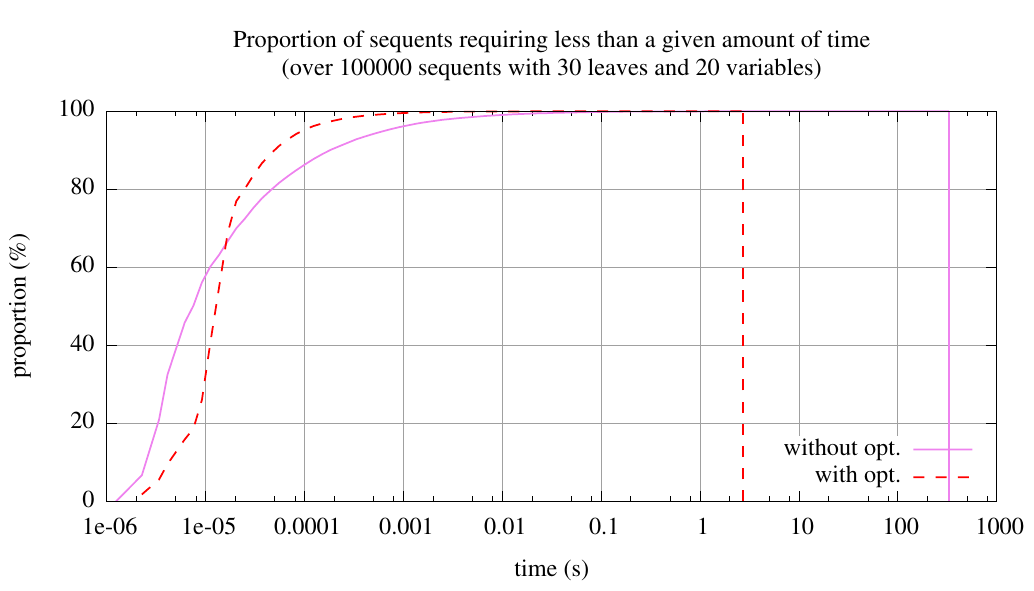}
  \caption{Distribution of searching times.}
  \label{fig:bench:distrib}
\end{figure}




\section{Conclusions and directions for future work}
\label{sec:ccl}

We proved untyping theorems for several standard structures, allowing
us to extend decidability results to the typed settings, and to
discover an optimisation of proof search for cyclic linear logic. All
results have been formally checked~\cite{this:web} with the Coq proof
assistant.

The untyping theorem for typed Kleene algebras is quite important in
the ATBR Coq library~\cite{atbr:itp}: it allows one to use our tactic
for Kleene algebras in typed settings, and, in particular, with
heterogeneous binary relations. The underlying decision procedure
being quite involved, we can hardly imagine proving its soundness with
respect to typed settings in a direct way. Even writing a
type-preserving version of the algorithm seems challenging.

At another level, we used the untyping theorem for semirings in order
to formalise Kozen's completeness proof~\cite{Koz94b} for Kleene
algebras. Indeed, this proof heavily relies on matrix constructions,
so that having adequate lemmas and tactics for working with possibly
rectangular matrices was a big plus: this allowed us to avoid the
ad-hoc constructions Kozen used to inject rectangular matrices into
square ones.

\subsection{References and related work}~\medskip
\label{ss:relwork}

The relationship between residuated lattices and substructural logics
is due to Ono and Komori~\cite{OnoK85};
see~\cite{galatos2007residuated} for a thorough introduction. Cyclic
linear logic was suggested by Girard and studied by
Yetter~\cite{Yetter90}. To the best of our knowledge, the idea of
adding types to the above structures is new. The axiomatisation of
Kleene algebras is due to Kozen~\cite{Koz94b}.

Our typed structures can be seen as very special cases of
\emph{partial} algebras~\cite{burmeister:partialalgebra}, where the
domain of partial operations is defined by typing
judgements. Similarly, one could use \emph{many-sorted}
algebras~\cite{higgins:manysorted} to mimic types using sorts.
Several encodings from partial algebras to total ones were proposed in
the literature~\cite{Mossakowski02,Diaconescu09}. Although they are
quite general, these results do not apply here: these encodings do not
preserve the considered theory since they need to introduce new
symbols and equations; as a consequence, ordinary untyped decision
procedures can no longer be used after the translation. Dojer has
shown that under some conditions, convergent term rewriting systems
for total algebras can be used to prove existence equations in partial
algebras~\cite{DBLP:journals/fuin/Dojer04}. While it seems applicable
to semirings, this approach does not scale to Kleene algebras or
residuated lattices, for which decidability does not arise from a term
rewriting system.

The idea of proving typed equations from untyped ones also appears in
the context of ``Pure Type Systems'' (PTSs), where one can use either
an untyped conversion rule, or a typed equality judgement. Whether
these two possible presentations were equivalent was open for some
time~\cite{GeuversWerner94}; Adams has shown that this is the case for
``functional'' PTSs~\cite{Adams06}, Herbelin and Siles recently
generalised the result to all PTSs~\cite{SilesHerbelin10}.  Although
the types we use here are quite basic (i.e., a type is just a pair of
abstract objects), our use of cut-free proof systems and factorisation
systems is reminiscent to their use of the Church-Rosser property.
Note however that unlike in functional programming languages, where
one usually relies on a Hindley-Milner type inference
algorithm~\cite{Hindley69,Milner78} to rule out ill-typed
programs, no inference algorithm is required with the algebraic theories
presented here: such an algorithm would always succeed since an
untyped proof systematically yields a typed proof.


Closer to our work is that of Kozen, who first proposed the idea of
untyping typed Kleene algebras, in order to avoid the aforementioned
matrix constructions~\cite{Koz98b}. He provided a different answer,
however: using model-theoretic arguments, he proved an untyping
theorem for the universal theory of ``$1$-free Kleene algebras''. The
restriction to $1$-free expressions is required, as shown by the
following counter-example: $\vdash 0=1 \Rightarrow a=b$ is a theorem
of semirings, although there are non trivial typed semirings where
$0=1$ holds at some types (e.g.,\ empty matrices), while $a=b$ is not
universally true at other types.


\subsection{Handling other structures}~\medskip
\label{ss:other}

\emph{Action algebras}~\cite{Pratt90,Jipsen04} are a natural extension
of the structures we studied in this paper: they combine the
ingredients from residuated lattices and Kleene algebras. In this
setting, left and right divisions make it possible to obtain a variety
rather than a quasi-variety: inference rules \rul{sl} and \rul{sr},
about the star operation, can be replaced by the following equational
axioms: %
\begin{mathpar}
  \inferrule*[Right=sl']{ }
  {\vdash \tstar{(a\backslash a)} = a\backslash a} \and
  \inferrule*[Right=sr']{ }{\vdash \tstar{(a/a)} = a/a} \and
\end{mathpar}
Although we do not know whether the untyping theorem holds in this
case, we can think of two strategies to tackle this problem: 1) find a
cut-free extension of the Gentzen proof system for residuated lattices
and adapt our current proof---such an extension is left as an open
question in~\cite{Jipsen04}, it would possibly entail decidability of
the equational theory of action algebras; 2) find a ``direct'' proof
of the untyping theorem for residuated monoids, without using a
Gentzen proof system, so that the methodology we used for Kleene
algebras can be extended.
Also note that we necessarily have to exclude the annihilator
element~$(0)$: with divisions, top~$(\top)$ can be defined as $0/0$,
so that the counter-example for bounded residuated lattices
(§\ref{ss:imall}) applies. Consistently, there is no way to remove
this element using a factorisation system: expressions like
$a\cdot{}\top$ cannot be simplified.

\emph{Kleene algebras with tests}~\cite{K97c} are another extension of
Kleene algebras, which is useful in program verification. Their
equational theory is decidable, but one cannot rely on a factorisation
system to remove annihilators in this setting: like for rings
(§\ref{ss:ring}), the complement operation of the Boolean algebra is
problematic. Moreover, like for Kleene algebras, there are no known
notions of normal form in Kleene algebras with tests, so that the
approach we described in~§\ref{ss:ring} is not possible. Nonetheless,
the untyping theorem is likely to hold for these structures since the
Boolean algebras of tests are inherently homogeneous.

Finally, although our methodology for semirings can be adapted to
handle the case of allegories [10] (see [33] for a proof), the cases
of distributive and division allegories---where left and right divisions
are added---remains open.

Finally, although our methodology for semirings can be adapted to
handle the case of \emph{allegories}~\cite{FreydScedrov90}
(see~\cite{this:web} for a proof), the cases of \emph{distributive
allegories} as well as \emph{division allegories}---where left and right
divisions are added---remain open.

\subsection{Towards a generic theory}~\medskip
\label{ss:flan}

The typed structures we focused on can be described in terms of
enriched categories, and the untyping theorems can be rephrased as
asserting the existence of faithful functors to one-object
categories. It would therefore be interesting to find out whether
category theory may help to define a reasonable class of structures
for which the untyping theorem holds. In particular, how could we
exclude the counter-example with additive constants in MALL?

For structures that are varieties, another approach would consist of
using term rewriting theory to obtain generic factorisation theorems
(Lemma~\ref{prp:s:factor}, which we used to handle the annihilating
element in semirings, would become a particular case). This seems
rather difficult, however, since these kind of properties are quite
sensitive to the whole set of operations and axioms that are
considered.

\section*{Acknowledgements.}
We are grateful to Olivier Laurent and Tom Hirschowitz for the highly
stimulating discussions we had on linear logic and about this work.

\bibliographystyle{plain}
\bibliography{bib}

\vspace{-40 pt}
\end{document}